\DeclareRobustCommand{\lyxsout}[1]{\ifx\\#1\else\sout{#1}\fi}
\numberwithin{equation}{section}
  \theoremstyle{plain}
  \newtheorem*{prop*}{\protect\propositionname}
\theoremstyle{plain}
\newtheorem{thm}{\protect\theoremname}[section]
  \theoremstyle{plain}
  \newtheorem{prop}[thm]{\protect\propositionname}
  \theoremstyle{plain}
  \newtheorem{lem}[thm]{\protect\lemmaname}
  \providecommand{\lemmaname}{Lemma}
  \providecommand{\propositionname}{Proposition}
\providecommand{\theoremname}{Theorem}
\begin{document}

\title{Non-parametric Archimedean generator estimation with implications
for multiple testing}

\author{André Neumann\thanks{University of Bremen, Institute for Statistics, Bibliothekstraße 1,
D-28359 Bremen, Germany, Email: neumann@uni-bremen.de}~ and Thorsten Dickhaus\thanks{University of Bremen, Institute for Statistics, Bibliothekstraße 1,
D-28359 Bremen, Germany, Email: dickhaus@uni-bremen.de}}
\maketitle
\begin{abstract}
In multiple testing, the family-wise error rate can be bounded under
some conditions by the copula of the test statistics. Assuming that
this copula is Archimedean, we consider two non-parametric Archimedean
generator estimators. More specifically, we use the non-parametric
estimator from \citet{Genest_Neslehova_Ziegel_2011} and a slight
modification thereof. In simulations, we compare the resulting multiple
tests with the Bonferroni test and the multiple test derived from
the true generator as baselines.
\end{abstract}

\section{Introduction}

\textquotedbl{}De copulis non est disputandum!\textquotedbl{} \citep{okhrin2010}.
Indeed, the utilization of copulas for the construction and the mathematical
analysis of multiple tests has meanwhile become an established technique
in simultaneous statistical inference; see \citet{stp-copulae}, \citet{Bodnar-Dickhaus-EJS},
\citet{Schmidt2014a}, \citet{Schmidt2014b}, \citet{stange2015uncertainty},
\citet{Cerqueti-Lupi2018}, \citet{Neumann_Bodnar_Pfeifer_Dickhaus_2019},
and Sections 2.2.4 and 4.4 of \citet{Dickhaus-Buch2014}. In particular,
\citet{stp-copulae} have explicitly shown that the copula approach
leads to the most general construction method for single-step multiple
tests under known univariate marginal null distributions of the test
statistics. This is due to Sklar's Theorem (see \citet{Sklar1959}),
which implies that any dependency structure among test statistics
can be expressed by a copula. Thus, by standardizing the marginal
null distributions (i.e., by transforming test statistics into $p$-values),
the copula carries the complete distributional information which is
necessary for the calibration of a (multivariate) multiplicity-adjusted
rejection threshold or local significance level, respectively.

One important class of copula functions is constituted by so-called
Archimedean copulas (cf., e.g., Section 6 of \citet{embrechts-copulae-survey},
Section 5.4 of \citet{Q-Risk-Management}, Section 3 of \citet{Cerqueti-Lupi2018},
and \citet{McNeil2009} for the relevance of Archimedean copulas in
various applications). Such copulas are defined by a generator function
$\phi$. In the case that $\phi$ is a completely monotone function,
the corresponding Archimedean copula possesses the property of being
multivariate totally positive of order $2$ (MTP$_{2}$), see \citet{Scarsini2005}.
This positive dependency property is very helpful for proving type
I error control of a variety of multiple tests, including the extremely
popular linear step-up test by \citet{benjamini1995controlling} for
control of the false discovery rate. \citet{Bodnar-Dickhaus-EJS}
have derived an adjustment factor for optimizing the power of the
linear step-up test in the case that the copula of the $p$-values
is of Archimedean type.

In practice, the copula of test statistics or $p$-values, respectively,
is often an unknown nuisance parameter (of potentially infinite dimension).
Therefore, it is near at hand to pre-estimate this copula and to utilize
the estimate in the construction of the rejection rule of the multiple
test. We call the resulting multiple test an empirically calibrated
multiple test. In the context of Archimedean copulas, \citet{stange2015uncertainty}
and \citet{Bodnar-Dickhaus-EJS} have investigated method-of-moments
and maximum likelihood estimators, respectively. Their examples, however,
are restricted to simple parametric copula families. \citet{Neumann_Bodnar_Pfeifer_Dickhaus_2019}
studied non-parametric Bernstein copula estimators with respect to
an empirical calibration of multiple tests. Their methodology, however,
is not specifically tailored towards Archimedean copula models.

In the present work, we therefore consider the problem of estimating
the generator function $\phi$ of an Archimedean copula non-parametrically,
and we analyze the impact of the estimation and its uncertainty on
the performance of empirically calibrated multiple tests. The material
is organized as follows. In \prettyref{sec: Section 2}, estimation
methods based on Kendall's processes, as well as a novel modification
thereof, are introduced. \prettyref{sec: Section 3} deals with multiple
testing methodology based on such estimators. Simulation results are
presented in \prettyref{sec: Simulations}, and we provide some conclusions
in \prettyref{sec: Conclusions}.

\section{Estimation of the Archimedean generator function using Kendall's
process\label{sec: Section 2}}

The application of Kendall's distribution functions for statistical
inference has been treated in quite some detail in \citet{Genest_Neslehova_Ziegel_2011}.
In this section, we shortly summarize their findings  and consider
additionally a slightly modified generator estimator. \citet{Genest_Neslehova_Ziegel_2011}
propose an algorithm for estimating an Archimedean copula based on
a slightly modified version of the estimated Kendall's distribution
function as defined by \citet{MR1405589}. In \citet{MR1405589} the
asymptotic properties of Kendall's process are analyzed. The estimator
of the Kendall's distribution function is modified such that these
properties still hold and the estimator itself is a Kendall's distribution
function. This avoids numerical issues and additionally, the continuous
mapping theorem (CMT) can be applied. The resulting copula estimator
is (weakly) consistent. We will see in the next section that this
translates under some conditions directly to the realized family-wise
error rate (FWER) in the case that the modified estimator is utilized
in the empirical calibration of a multiple single-step test.

Further, \citet{Genest_Neslehova_Ziegel_2011} discussed the identifiability
of statistical models based on Kendall's distribution function. In
particular, the two- and three-dimensional cases are proven. In the
two-dimensional case this property can be verified directly, i.e.,
the Archimedean generator can be directly calculated from Kendall's
distribution function. In the three-dimensional case they show that
the model is identified by reducing the dimensionality in a sense.
More precisely, they show that from Kendall's distribution function
$K_{\phi_{1},m=3}=K_{\phi_{2},m=3}$ follows $K_{\phi_{1},m=2}=K_{\phi_{2},m=2}$,
where $\phi_{1}$, $\phi_{2}$ are Archimedean generator functions
and $m$ is the dimension. Inspired by this argumentation, we consider
additionally an estimator based on averaging over two-dimensional
marginal data samples in our simulations, i.e.,
\[
\hat{\phi}_{n,m}={m \choose 2}^{-1}\sum_{j_{1}=1}^{m}\sum_{j_{2}=j_{1}+1}^{m}\hat{\phi}_{n,2}^{\text{GNZ}}\left(X_{i,j_{1}},X_{i,j_{2}}:1\leq i\leq n\right),
\]
where $\hat{\phi}_{n,m}^{\text{GNZ}}$ is the generator estimator
from \citet{Genest_Neslehova_Ziegel_2011} and $n$ is the sample
size.

This estimator can be theoretically motivated as follows. Let $C_{\phi}$
defined by $C_{\phi}\left(\boldsymbol{u}\right)=\psi\left(\sum_{j=1}^{m}\phi\left(u_{j}\right)\right)$,
$\boldsymbol{u}\in\left[0,1\right]^{m}$, be an $m$-dimensional Archimedean
copula, where the generator $\phi:\left[0,1\right]\rightarrow\mathbb{R}_{+}$
is a continuous, convex and strictly decreasing function with $\phi\left(1\right)=0$
and $\psi:v\mapsto\inf\left\{ u\in\left[0,1\right]|\phi\left(u\right)\leq v\right\} $
is the generalized inverse of $\phi$. For convenience, we will call
both - $\phi$ and $\psi$ - the Archimedean generator and $C_{\psi}=C_{\phi}$,
$K_{\psi}=K_{\phi}$ . Let $\phi_{n}$ be a sequence of generators
corresponding to given Kendall's distribution functions $K_{n,m=2}$
which converge weakly\footnote{A distribution function converges weakly ($\stackrel{w}{\rightarrow}$)
if and only if it converges pointwise for all continuity points of
the limit function.} to $K$. \citet{Genest_Neslehova_Ziegel_2011} used their Proposition
6 in order to translate the convergence of the generators to the convergence
of the Archimedean copulas. Let us restate this proposition:
\begin{prop*}[Proposition 6 in \citet{Genest_Neslehova_Ziegel_2011}]
Let $\psi$, $\psi_{1}$, $\psi_{2}$, ... be $m$-monotone\footnote{A generator $\psi$ is $m$-monotone if and only if it has $m-2$
derivatives such that $\left(-1\right)^{i}\psi^{\left(i\right)}\geq0$
for all $0\leq i\leq m-2$, and $\left(-1\right)^{m-2}\psi^{\left(m-2\right)}$
is decreasing and convex.} Archimedean generators. Then $C_{\psi_{n},m}\stackrel{w}{\rightarrow}C_{\psi,m}$,
as $n\rightarrow\infty$, if and only if there exists a sequence $\left(a_{n}\right)$
of constants $a_{n}>0$ such that, for any $x\in[0,\infty)$, $\psi_{n}\left(a_{n}x\right)\rightarrow\psi\left(x\right)$,
as $n\rightarrow\infty$.
\end{prop*}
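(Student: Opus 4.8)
The plan is to exploit the scale invariance of Archimedean copulas: since $C_{\psi(c\,\cdot)}=C_{\psi}$ for every $c>0$, the generator of an Archimedean copula is pinned down only up to a positive rescaling, and the sequence $(a_{n})$ is exactly what absorbs this ambiguity. I would treat the two implications separately, and throughout use that weak convergence of the $C_{\psi_{n}}$ is equivalent to pointwise convergence, because the limit $C_{\psi}$ is continuous.

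\emph{Sufficiency.} Replacing $\psi_{n}$ by $x\mapsto\psi_{n}(a_{n}x)$ (which leaves $C_{\psi_{n}}$ and the hypothesis unchanged, the new generalized inverse being $a_{n}^{-1}\phi_{n}$) I may assume $\psi_{n}\to\psi$ pointwise on $[0,\infty)$ and must deduce $C_{\psi_{n}}\to C_{\psi}$ pointwise. Since the $\psi_{n}$ are monotone and the limit $\psi$ is continuous and monotone, a P\'{o}lya-type argument upgrades pointwise convergence to locally uniform convergence on $[0,\infty)$; this in turn yields convergence of the generalized inverses $\phi_{n}\to\phi$ at every continuity point of $\phi$, that is, on $(0,1]$ (the only possible problem point is $0$, where $\phi$ may be $+\infty$). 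Now fix $\boldsymbol{u}\in[0,1]^{m}$: if some $u_{j}=0$ then $C_{\psi_{n}}(\boldsymbol{u})=0=C_{\psi}(\boldsymbol{u})$ by groundedness of copulas, while if $\boldsymbol{u}\in(0,1]^{m}$ then $s_{n}:=\sum_{j=1}^{m}\phi_{n}(u_{j})\to s:=\sum_{j=1}^{m}\phi(u_{j})<\infty$, so the $s_{n}$ lie in a fixed compact set and $C_{\psi_{n}}(\boldsymbol{u})=\psi_{n}(s_{n})\to\psi(s)=C_{\psi}(\boldsymbol{u})$ by bounding $|\psi_{n}(s_{n})-\psi(s)|\le|\psi_{n}(s_{n})-\psi(s_{n})|+|\psi(s_{n})-\psi(s)|$ and invoking uniform convergence on that compact set together with continuity of $\psi$.

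\emph{Necessity.} Assume $C_{\psi_{n}}\to C_{\psi}$ weakly and put $a_{n}:=\inf\{x\ge0:\psi_{n}(x)\le1/2\}$; this is finite and strictly positive because every generator decreases strictly from $\psi_{n}(0)=1$ down to $\psi_{n}(\infty)=0$, and likewise normalize $\psi$ so that $\psi(1)=1/2$, this being the unique generator with value $1/2$ at $1$ representing $C_{\psi}$. Setting $\widetilde{\psi}_{n}(x)=\psi_{n}(a_{n}x)$ we have $\widetilde{\psi}_{n}(1)=1/2$, $C_{\widetilde{\psi}_{n}}=C_{\psi_{n}}\to C_{\psi}$, and it suffices to prove $\widetilde{\psi}_{n}\to\psi$ pointwise. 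By Helly's selection theorem every subsequence of $(\widetilde{\psi}_{n})$ has a further subsequence converging pointwise, off an at most countable set, to some decreasing $g:[0,\infty)\to[0,1]$; moreover convexity forces each $\widetilde{\psi}_{n}$ to be $1$-Lipschitz on $[1/2,\infty)$, so $g$ is continuous there and $g(1)=1/2$. If one knows that every such $g$ is again an $m$-monotone generator (in particular $g(0)=1$, $g(\infty)=0$, and $g$ has no interior jump), then the sufficiency direction applied along the subsequence gives $C_{\widetilde{\psi}_{n_{k}}}\to C_{g}$, whence $C_{g}=C_{\psi}$; by the standard uniqueness of the Archimedean generator up to scaling together with the common normalization, $g=\psi$. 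Since every subsequence thus has a sub-subsequence converging pointwise to the continuous function $\psi$, the whole sequence $(\widetilde{\psi}_{n})$ converges pointwise to $\psi$, and $(a_{n})$ is the desired sequence.

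The crux — and the step I expect to be the real obstacle — is precisely the claim that subsequential limits $g$ are genuine $m$-monotone generators, i.e. ruling out the degeneracies in which radial mass escapes to $0$ or to $\infty$ (a jump of $g$ at the origin or in the interior, or $g(\infty)>0$). The cleanest way to handle this is to pass to the Williamson $m$-transform representation $\psi(x)=\int_{(0,\infty)}(1-x/t)_{+}^{m-1}\,dF(t)$ and the associated $\ell_{1}$-norm symmetric (radial) stochastic representation of $C_{\psi}$ in the sense of McNeil and Ne\v{s}lehov\'{a}: weak convergence $C_{\psi_{n}}\to C_{\psi}$ is equivalent to weak convergence of the radial distributions after a suitable common rescaling, the normalization $\widetilde{\psi}_{n}(1)=1/2$ keeps those rescaled distributions tight on $(0,\infty)$, and continuity of the Williamson $m$-transform translates weak convergence of the radial distributions back into pointwise convergence $\widetilde{\psi}_{n}\to\psi$; indeed the entire necessity argument can be run at the level of radial distributions. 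Failing that, one can argue more directly that if a subsequential limit $g$ had such a degeneracy, then the diagonal sections $C_{\widetilde{\psi}_{n_{k}}}(u,\dots,u)=\widetilde{\psi}_{n_{k}}\!\left(m\,\phi_{n_{k}}(u)\right)$ would tend to a function constant at a strictly positive level on a nondegenerate interval of $u$, which no diagonal section $u\mapsto\psi(m\phi(u))$ of an Archimedean copula can be, contradicting $C_{\widetilde{\psi}_{n_{k}}}\to C_{\psi}$.
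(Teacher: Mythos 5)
The paper does not actually prove this statement: it is quoted verbatim from Genest, Ne\v{s}lehov\'{a} and Ziegel (2011) and used as a black box, so there is no in-paper proof to compare against. Judged on its own terms, your sufficiency half is essentially complete, and it is the same mechanism the paper itself gestures at when it extracts its own Proposition 2.1 from this result (reduce to $a_{n}=1$ by rescaling, upgrade pointwise to locally uniform convergence of the monotone $\psi_{n}$ to the continuous limit, deduce convergence of the generalized inverses $\phi_{n}$ on $(0,1]$, and conclude pointwise convergence of the copulas); the only quibble there is cosmetic (the equi-Lipschitz constant on $[1/2,\infty)$ coming from convexity is $2$, not $1$).

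The necessity half, however, contains a genuine gap that you yourself flag: the claim that every Helly subsequential limit $g$ of the normalized generators $\widetilde{\psi}_{n}$ is again an $m$-monotone generator, i.e.\ that $g(0+)=1$, $g(\infty)=0$ and $g$ has no interior jump. That claim is where the entire content of the proposition lives, and neither of your proposed repairs is actually executed. The Williamson-transform route is indeed the one taken in the original source (the bijection between $m$-monotone generators and radial distribution functions on $(0,\infty)$, plus the equivalence of $C_{\psi_{n}}\stackrel{w}{\rightarrow}C_{\psi}$ with weak convergence of suitably rescaled radial laws), but invoking continuity of that transform presupposes tightness and non-degeneracy of the rescaled radial distributions, which is the same escaping-mass problem in different clothing; asserting that the normalization $\widetilde{\psi}_{n}(1)=1/2$ "keeps those distributions tight" is precisely what needs proof. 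The fallback diagonal-section argument is too loose as stated: if, say, $g(\infty)=c>0$, the diagonal sections $\widetilde{\psi}_{n_{k}}\left(m\,\widetilde{\phi}_{n_{k}}(u)\right)$ do not visibly converge to a function from which you can read off $c$, because you have as yet no control on $\widetilde{\phi}_{n_{k}}(u)$ for $u$ near $0$ or $1$. So: right skeleton, correct identification of the crux, but the crux is not closed.
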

Assume that $\psi_{n}$ is only $2$-monotone. Then $C_{\psi_{n},m}$
is not an Archimedean copula anymore. However, the argumentation for
the converse only uses the uniform convergence of $\psi_{n}$ and
$\phi_{n}$ for any dimension $m\geq2$. This means that the following
proposition holds:
\begin{prop}
Let $\psi$, $\psi_{1}$, $\psi_{2}$, ... be $2$-monotone Archimedean
generators. If there exists a sequence $\left(a_{n}\right)$ of constants
$a_{n}>0$ such that, for any $x\in[0,\infty)$, $\psi_{n}\left(a_{n}x\right)\rightarrow\psi\left(x\right)$,
as $n\rightarrow\infty$, then $C_{\psi_{n},m}\stackrel{w}{\rightarrow}C_{\psi,m}$,
as $n\rightarrow\infty$.
\end{prop}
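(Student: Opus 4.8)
The plan is to run the same argument \citet{Genest_Neslehova_Ziegel_2011} use for the ``if'' direction of their Proposition~6, observing that this direction never exploits $m$-monotonicity for $m>2$: it only needs that the generators and their generalized inverses converge uniformly, and $2$-monotonicity — i.e.\ convexity and monotonicity of $\psi$ — already delivers that. First I would reduce to the case $a_n\equiv 1$. Put $\tilde\phi_n:=\phi_n/a_n$ and $\tilde\psi_n(x):=\psi_n(a_nx)$; a one-line check shows that $\tilde\psi_n$ is the generalized inverse of $\tilde\phi_n$, so $(\tilde\phi_n,\tilde\psi_n)$ is again a $2$-monotone Archimedean generator pair, and — because the factor $a_n$ cancels in $\psi_n\bigl(\sum_j\phi_n(u_j)\bigr)$ — one has $C_{\tilde\psi_n,m}=C_{\psi_n,m}$ for every dimension $m$. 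The hypothesis then reads $\tilde\psi_n(x)\to\psi(x)$ for all $x\in[0,\infty)$, so without loss of generality $a_n\equiv 1$ and $\psi_n\to\psi$ pointwise.

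Next I would upgrade this to uniform convergence. Each $\psi_n$ is non-increasing with $\psi_n(0)=1$ (because $\phi_n$ is strictly decreasing with $\phi_n(1)=0$) and $\lim_{v\to\infty}\psi_n(v)=0$, so $v\mapsto 1-\psi_n(v)$ is a cumulative distribution function on $[0,\infty)$; these converge pointwise to $v\mapsto 1-\psi(v)$, which is continuous since $\psi$ is convex (here is where $2$-monotonicity enters). By P\'olya's uniform convergence theorem, $\psi_n\to\psi$ uniformly on $[0,\infty)$.

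The heart of the proof — and the step I expect to be the main obstacle — is passing from uniform convergence of the generators to convergence of their generalized inverses. Fix $u\in(0,1]$ and set $t:=\phi(u)$, so $\psi(t)=u$. By compactness of $[0,\infty]$, any subsequence of $\bigl(\phi_n(u)\bigr)_n$ has a further subsequence converging to some $t'\in[0,\infty]$; plugging this into the identity $\psi_n\bigl(\phi_n(u)\bigr)=u$ and using the uniform convergence of the previous step together with the continuity of $\psi$ on $[0,\infty)$ and $\psi(v)\to 0$ as $v\to\infty$, one obtains $\psi(t')=u$. Since $\psi$ is strictly decreasing on $\{v:\psi(v)>0\}$ and $u>0$, this forces $t'=t$, whence $\phi_n(u)\to\phi(u)$ for every $u\in(0,1]$. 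The care needed here — continuity and strict monotonicity of the \emph{limit} generator $\psi$ on its support, plus the treatment of possibly non-strict generators (where $\psi$ vanishes on a half-line) and of the boundary value $u=0$ — is exactly what makes the conclusion survive under mere $2$-monotonicity; nothing stronger is used anywhere.

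Finally I would assemble the pieces. For $\boldsymbol u\in[0,1]^m$ with a zero coordinate, both $C_{\psi_n,m}(\boldsymbol u)$ and $C_{\psi,m}(\boldsymbol u)$ vanish, since the argument of $\psi_n$ (resp.\ $\psi$) is then at least $\phi_n(0)$ (resp.\ $\phi(0)$), where the generalized inverse already equals $0$. For $\boldsymbol u\in(0,1]^m$, put $S_n:=\sum_{j=1}^m\phi_n(u_j)$ and $S:=\sum_{j=1}^m\phi(u_j)<\infty$; by the inverse-convergence step $S_n\to S$, and the elementary fact that $f_n\to f$ uniformly with $f$ continuous and $s_n\to s$ implies $f_n(s_n)\to f(s)$ (applied with $f_n=\psi_n$, $f=\psi$) gives $C_{\psi_n,m}(\boldsymbol u)=\psi_n(S_n)\to\psi(S)=C_{\psi,m}(\boldsymbol u)$. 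Hence $C_{\psi_n,m}\to C_{\psi,m}$ pointwise on $[0,1]^m$; as $\psi$, and therefore $C_{\psi,m}$, is continuous, every point is a continuity point, so pointwise convergence is precisely $C_{\psi_n,m}\stackrel{w}{\rightarrow}C_{\psi,m}$, as claimed.
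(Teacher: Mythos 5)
Your proof is correct and follows essentially the same route as the paper, which simply observes that the ``if'' direction of Proposition~6 in \citet{Genest_Neslehova_Ziegel_2011} uses only the uniform convergence of $\psi_{n}$ and $\phi_{n}$ and hence survives under mere $2$-monotonicity. You have in fact supplied the details (rescaling to $a_{n}\equiv1$, P\'olya's theorem, convergence of the generalized inverses, and the assembly on the diagonal of the copula) that the paper delegates to that reference.
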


Applying Algorithm 1 from \citet{Genest_Neslehova_Ziegel_2011} to
two-dimensional marginal data results in a $2$-monotone estimator
$\hat{\psi}_{n}$ (see Section 4.3 in \citet{Genest_Neslehova_Ziegel_2011}).
For a specific sequence $\left(a_{n}\right)_{n\in\mathbb{N}}$ this
estimator converges uniformly to the true generator $\psi$ on compact
sets in probability (see Section 7 in \citet{Genest_Neslehova_Ziegel_2011}).
Thus, $C_{\hat{\psi}_{n},m}\stackrel{w}{\rightarrow}C_{\psi,m}$ holds
in probability (and therefore, uniformly in probability since $C_{\psi,m}$
is continuous and $\left[0,1\right]$ is a compact interval).

On the other hand, for a $2$-monotone generator $\psi_{n}$ the Archimedean
copula restricted to the diagonal line $u\mapsto C_{\psi_{n},m}\left(u,\dots,u\right)=\psi_{n}\left(m\cdot\phi_{n}\left(u\right)\right)$
is still a one-dimensional distribution function. We will use this
function in the next section for calibrating the multiple test under
equal local significance levels. This means that we will receive valid
(i.e., in $\left[0,1\right]$) local significance levels.. Therefore,
the negative impact of using two-dimensional marginal data on multiple
testing is potentially small. We will explore this further in the
simulations of \prettyref{sec: Simulations}.    

\section{Multiple testing\label{sec: Section 3}}

In this section, we apply the Archimedean copula estimator of \citet{Genest_Neslehova_Ziegel_2011}
in a multiple testing context. The consistency of the realized FWER
is a direct consequence of the connection between the FWER and the
copula of the test statistics (see Theorem 2 in \citet{stp-copulae})
and the consistency of the estimator (see Theorem 2 in \citet{Genest_Neslehova_Ziegel_2011}).

\subsubsection*{Model}

We perform a multiple test $\boldsymbol{\varphi}=\left(\varphi_{1},\ldots,\varphi_{1}\right)^{\top}$
for $m\geq2$ null hypotheses $H_{j}\subset\boldsymbol{\Theta}$ versus
$K_{j}\subseteq\boldsymbol{\Theta}\backslash H_{j}$, $1\leq j\leq m$.
The parameter space $\boldsymbol{\Theta}=\Theta_{1}\times\ldots\times\Theta_{m}$
is a subset of $\mathbb{R}^{m}$. The marginal test $\varphi_{j}$
rejects the null hypothesis $H_{j}$ if and only if the corresponding
$p$-value $p_{j}$ is smaller than a local significance level $\alpha_{\text{loc}}^{\left(j\right)}\in\left(0,1\right)$.
Our goal is to find $\alpha_{\text{loc}}^{\left(1\right)},\ldots,\alpha_{\text{loc}}^{\left(m\right)}$
such that the FWER is controlled at the global significance level
$\alpha\in\left(0,1\right)$. A FWE occurs whenever we reject at least
one true null hypothesis.

\subsubsection*{Assumptions}

Let $\boldsymbol{\vartheta}=\left(\vartheta_{1},\ldots,\vartheta_{m}\right)^{\top}\in\boldsymbol{\Theta}$
be the parameter vector of interest and $1\leq j\leq m$. We assume
that the following statements hold true. 
\begin{enumerate}
\item The null hypothesis $H_{j}$ has the form $H_{j}:\left\{ \boldsymbol{\vartheta}\mid\vartheta_{j}\in\Theta_{j}^{*}\right\} $,
where $\emptyset\neq\Theta_{j}^{*}\subset\Theta_{j}\subseteq\mathbb{R}$.
\end{enumerate}
This means that we can only test marginal parameters. In particular,
the number of parameters is equal to the number of hypotheses $m$.
With a slight abuse of notation, $H_{j}$ denotes also $\Theta_{j}^{*}$.

We define the $p$-value $p_{j}$ by $1-\inf_{\boldsymbol{\vartheta}\in H_{j}}F_{T_{j}\mid\boldsymbol{\vartheta}}\left(T_{j}\right)$,
where $T_{j}$ is known.
\begin{enumerate}[resume]
\item The test statistic $T_{j}$ tends to larger values under the alternative
$K_{j}$.
\end{enumerate}
This means that the $p$-values tend to smaller values under alternatives.
\begin{enumerate}[resume]
\item The cdf $F_{T_{j}\mid\boldsymbol{\vartheta}}=F_{T_{j}\mid\vartheta_{j}}$
of$T_{j}$ depends only on $\vartheta_{j}$.
\end{enumerate}
We test the marginal parameter $\vartheta_{j}\in H_{j}$ by using
only the $p$-value $p_{j}$. It makes sense that $p_{j}$ does not
depend on the other parameters.
\begin{enumerate}[resume]
\item There exists a $\vartheta_{j}^{*}\in H_{j}$ such that $F_{T_{j}\mid\vartheta_{j}^{*}}\equiv\inf_{\vartheta_{j}\in H_{j}}F_{T_{j}\mid\vartheta_{j}}$.
\end{enumerate}
The important part here is that this equality holds for every point.
This means that the parameters $\vartheta_{1}^{*},\ldots,\vartheta_{m}^{*}$
are independent of the observed test statistics.
\begin{enumerate}[resume]
\item The cdf $F_{T_{j}\mid\vartheta_{j}^{*}}$ is continuous.
\end{enumerate}
Under this assumption, we have the following properties: The corresponding
$p$-value $p_{j}=1-F_{T_{j}\mid\vartheta_{j}^{*}}\left(T_{j}\right)$
is valid, i.e., $\text{Prob}\left(\left(p_{j}\mid H_{j}\right)\leq u\right)\leq u$,
and uniformly distributed under $\vartheta_{j}^{*}$. We write $Z\mid H_{j}$
or $Z\mid K_{j}$ for a random variable $Z=Z\mid\vartheta_{j}$ under
true null hypothesis $\vartheta_{j}\in H_{j}$ or under true alternative
$\vartheta_{j}\in K_{j}$, respectively. Further, we have almost surely
that $p_{j}<\alpha_{j}$, $\alpha_{j}\in\left[0,1\right]$, if and
only if $T_{j}>c_{j}\left(\alpha_{j}\right)$, where $c_{j}\left(\alpha_{j}\right)=F_{T_{j}\mid\vartheta_{j}^{*}}^{-1}\left(1-\alpha_{j}\right)$.
In addition, the copula of the test statistics under $\boldsymbol{\vartheta}^{*}$
is unique by Sklar's theorem.
\begin{enumerate}[resume]
\item The parameter vector $\boldsymbol{\vartheta}^{*}$ is a least favorable
configuration (LFC).
\end{enumerate}
An LFC is a parameter vector which maximizes the FWER. In our case
the LFC $\boldsymbol{\vartheta}^{*}$ is located in the global null
hypothesis $H_{0}=\bigcap_{j=1}^{m}H_{j}$. This means that weak FWER
control (i.e., control on $H_{0}$) entails strong FWER control (i.e.,
control on $\boldsymbol{\Theta}$). In our setting the subset pivotality
condition (see \citet{westfallyoung}) is sufficient for this assumption
(see Lemma 3.1 in \citet{Dickhaus_Stange_2013} and Lemma 3.3 in \citet{Neumann_Bodnar_Pfeifer_Dickhaus_2019}).
\begin{enumerate}[resume]
\item The copula of the test statistics $C_{\boldsymbol{T}\mid\boldsymbol{\vartheta}^{*}}$
under the LFC \textbf{$\boldsymbol{\vartheta}^{*}$ }is Archimedean.
\end{enumerate}
Note that $C_{\boldsymbol{T}\mid\boldsymbol{\vartheta}^{*}}$ can
depend on the whole parameter vector $\boldsymbol{\vartheta}^{*}$.
\begin{enumerate}[resume]
\item A pseudo sample of the test statistics under $\boldsymbol{\vartheta}^{*}$
is at hand.
\end{enumerate}
This sample is needed to calibrate the multiple test and can be (approximately)
achieved by bootstrapping under $\boldsymbol{\vartheta}^{*}$ (see
\citet{MR515681}) for suitable test statistics.

\subsubsection*{Calibration and consistency}

Under some assumptions, \citet{stp-copulae} have shown that the
FWER is uniformly bounded on the parameter space $\boldsymbol{\Theta}$
by the copula of the random vector $\left(1-p_{1},\ldots,1-p_{m}\right)$
under the LFC $\boldsymbol{\vartheta}^{*}$. In our setup this is
the same copula as the copula of the test statistics. More precisely,
under our assumptions we have
\begin{align*}
\text{FWER}_{\boldsymbol{\vartheta}}\left(\varphi_{1},\ldots,\varphi_{m}\right) & \leq\text{FWER}_{\boldsymbol{\vartheta}^{*}}\left(\varphi_{1},\ldots,\varphi_{m}\right)\\
 & =\text{Prob}\left(\bigcup_{j=1}^{m}\left\{ \left(p_{j}\mid\vartheta_{j}^{*}\right)<\alpha_{\text{loc}}^{\left(j\right)}\right\} \right)\\
 & =1-\text{Prob}\left(\bigcap_{j=1}^{m}\left\{ 1-\left(p_{j}\mid\vartheta_{j}^{*}\right)\leq1-\alpha_{\text{loc}}^{\left(j\right)}\right\} \right)\\
 & =1-C_{\boldsymbol{T}\mid\boldsymbol{\vartheta}^{*}}\left(1-\alpha_{\text{loc}}^{\left(1\right)},\ldots,1-\alpha_{\text{loc}}^{\left(m\right)}\right).
\end{align*}
The bound depends on the copula of the test statistics under the LFC
$\boldsymbol{\vartheta}^{*}$. This has the advantage that we only
need to estimate the Archimedean copula under one parameter vector
in order to calibrate the multiple test. Setting $\alpha_{\text{loc}}=\alpha_{\text{loc}}^{\left(1\right)}=\ldots=\alpha_{\text{loc}}^{\left(m\right)}$,
the estimated local significance level $\hat{\alpha}_{\text{loc},n}$
is given by
\[
\hat{\alpha}_{\text{loc},n}=1-\hat{C}_{\boldsymbol{T},n}^{-1}\left(1-\alpha\right),
\]
where $\hat{C}_{\boldsymbol{T},n}^{-1}$ is the quantile of the estimated
Archimedean copula on the diagonal line $u\mapsto\hat{C}_{\boldsymbol{T},n}\left(u,\ldots,u\right)=\hat{\psi}_{n}\left(m\cdot\hat{\phi}_{n}\left(u\right)\right)$.
For convenience, we will write occasionally $C\left(u\right)$ instead
of $C\left(u,\ldots,u\right)$. Note that $\hat{C}_{\boldsymbol{T},n}^{-1}\left(v\right)=\hat{\psi}_{n}\left(m^{-1}\cdot\hat{\phi}_{n}\left(v\right)\right)$
for $v\in(0,1]$ and zero otherwise. This means that $\hat{\alpha}_{\text{loc},n}$
can be drawn in the graph of $\hat{\phi}_{n}$ (see \prettyref{fig: alpha_loc}).
\begin{figure}[H]
\begin{centering}
\includegraphics[width=0.5\columnwidth]{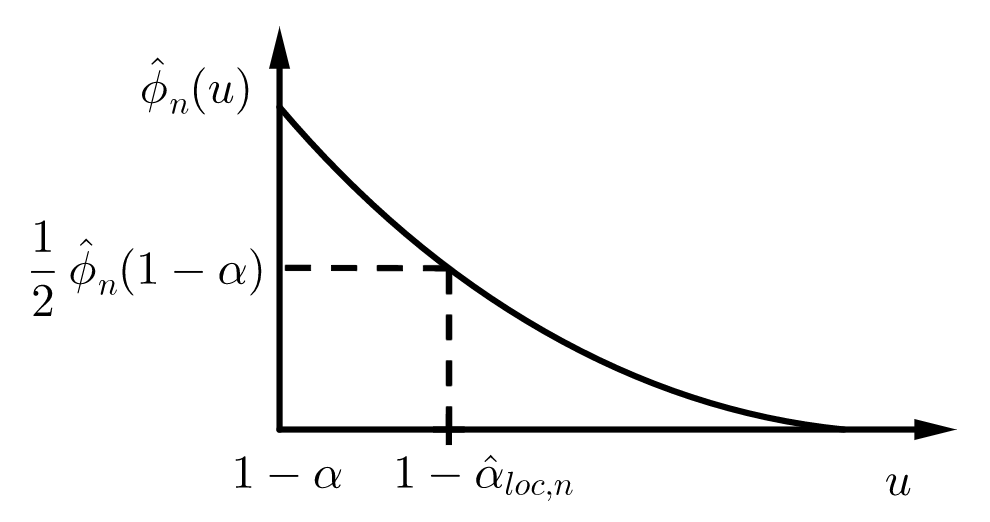}
\par\end{centering}
\caption{The estimated local significance level $\hat{\alpha}_{\text{loc},n}$
for $m=2$ drawn in the graph of the estimated generator $\hat{\phi}_{n}$
.\label{fig: alpha_loc}}
\end{figure}
\begin{lem}
\label{lem: hat_C_T is uniformly consistent}The quantile function
$\hat{C}_{\boldsymbol{T},n}^{-1}$ is uniformly consistent, i.e.,
it converges uniformly to $C_{\boldsymbol{T}\mid\boldsymbol{\vartheta}^{*}}^{-1}$
in probability as $n\rightarrow\infty$. In particular, $\hat{\alpha}_{\text{loc},n}$
is consistent.
\end{lem}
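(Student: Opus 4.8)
The plan is to reduce the claim to a one–dimensional inversion argument along the diagonal of the copula. First observe that, by the closed form recorded above, $\hat{C}_{\boldsymbol{T},n}^{-1}$ is just the generalized inverse $g_n^{-1}(v)=\inf\{u\in[0,1]:g_n(u)\ge v\}$ of the univariate distribution function $g_n(u):=\hat{C}_{\boldsymbol{T},n}(u,\dots,u)=\hat{\psi}_n\!\bigl(m\cdot\hat{\phi}_n(u)\bigr)$, and likewise $C_{\boldsymbol{T}\mid\boldsymbol{\vartheta}^{*}}^{-1}$ is the generalized inverse of $g(u):=C_{\boldsymbol{T}\mid\boldsymbol{\vartheta}^{*}}(u,\dots,u)=\psi\!\bigl(m\cdot\phi(u)\bigr)$. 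The paragraph preceding the lemma already supplies the necessary input on the distribution functions themselves: combining the uniform–on–compacts convergence in probability of $\hat{\psi}_n$ (after rescaling by $a_n$) due to \citet{Genest_Neslehova_Ziegel_2011}, the Proposition of \prettyref{sec: Section 2}, and the continuity of the limit on the compact cube $[0,1]^m$ (so that, by P\'olya's theorem, the weak convergence is uniform), one has
\[
\Delta_n:=\sup_{u\in[0,1]}\bigl|g_n(u)-g(u)\bigr|\;\xrightarrow{\;P\;}\;0 .
\]

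The heart of the matter is then the deterministic fact that inversion is continuous at $g$ in the supremum norm. Because $\phi$ is continuous, strictly decreasing with $\phi(1)=0$ and $\psi$ is continuous, $g$ is continuous and non-decreasing on $[0,1]$ with $g(0)=0$, $g(1)=1$, and $g$ is strictly increasing on the sub-interval on which it is not identically zero; hence $g$ restricts to a homeomorphism of some $[u_0,1]$ onto $[0,1]$, so that $C_{\boldsymbol{T}\mid\boldsymbol{\vartheta}^{*}}^{-1}$ is continuous on $(0,1]$ (and on all of $[0,1]$ for strict generators, where $u_0=0$). I would then prove: if distribution functions $g_n$ on $[0,1]$ satisfy $\sup_u|g_n-g|\to 0$ with $g$ as above, then $\sup_{v\in[\varepsilon,1]}|g_n^{-1}(v)-g^{-1}(v)|\to 0$ for every $\varepsilon>0$ (and over all of $[0,1]$ for strict generators). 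The argument is the classical one: fix $\varepsilon>0$; by strict monotonicity, continuity and compactness of the domain,
\[
\delta:=\inf_{v}\min\bigl\{\,g\bigl(g^{-1}(v)+\varepsilon\bigr)-v,\;\;v-g\bigl(g^{-1}(v)-\varepsilon\bigr)\,\bigr\}\;>\;0 ,
\]
and as soon as $\sup_u|g_n-g|<\delta$, the equivalence $g_n^{-1}(v)\le t\iff v\le g_n(t)$ forces $|g_n^{-1}(v)-g^{-1}(v)|\le\varepsilon$ for all admissible $v$ (endpoints and the region where $g^{-1}(v)\pm\varepsilon$ leaves $[0,1]$ being treated separately but trivially).

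It remains to transfer this to convergence in probability and to read off the corollary. Combining the first two steps through the subsequence criterion — along any subsequence extract a further subsequence on which $\Delta_n\to 0$ almost surely, and apply the inversion lemma path by path on that almost sure event — gives $\sup_{v\in[\varepsilon,1]}\bigl|\hat{C}_{\boldsymbol{T},n}^{-1}(v)-C_{\boldsymbol{T}\mid\boldsymbol{\vartheta}^{*}}^{-1}(v)\bigr|\xrightarrow{P}0$ for each $\varepsilon>0$ (uniformly on $[0,1]$ when the generator is strict); equivalently, this is the continuous mapping theorem applied to the sup-norm-continuous inversion map, in the spirit of the CMT arguments of \citet{Genest_Neslehova_Ziegel_2011}. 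This is the asserted uniform consistency of $\hat{C}_{\boldsymbol{T},n}^{-1}$. For the ``in particular'', note that $\hat{\alpha}_{\text{loc},n}=1-\hat{C}_{\boldsymbol{T},n}^{-1}(1-\alpha)$ with the fixed evaluation point $1-\alpha\in(0,1)$; choosing $\varepsilon<1-\alpha$ we obtain $\hat{C}_{\boldsymbol{T},n}^{-1}(1-\alpha)\xrightarrow{P}C_{\boldsymbol{T}\mid\boldsymbol{\vartheta}^{*}}^{-1}(1-\alpha)$ and hence $\hat{\alpha}_{\text{loc},n}\xrightarrow{P}1-C_{\boldsymbol{T}\mid\boldsymbol{\vartheta}^{*}}^{-1}(1-\alpha)$, i.e., $\hat{\alpha}_{\text{loc},n}$ is consistent.

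The step I expect to be the main obstacle is the inversion lemma of the second paragraph: one has to make the modulus $\delta$ genuinely uniform in $v$ (this is where compactness of the domain enters) and to be careful near $v=0$, because a non-strict generator $\phi$ with $\phi(0)<\infty$ makes $g$ constant on an initial stretch $[0,u_0]$ and therefore makes $C_{\boldsymbol{T}\mid\boldsymbol{\vartheta}^{*}}^{-1}$ discontinuous at $0$; this is the reason uniform convergence of the quantile function can in general be claimed only on intervals $[\varepsilon,1]$, which is immaterial for the calibration since $1-\alpha$ is bounded away from $0$ and $1$.
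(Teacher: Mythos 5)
Your proof is correct and follows essentially the same route as the paper: establish that the diagonal sections converge (uniformly, by a P\'olya-type argument), observe that generalized inversion is continuous with respect to this convergence, and conclude via the continuous mapping theorem; the only cosmetic difference is that the paper cites Proposition 0.1 of \citet{Resnick_1987} for the inversion step, whereas you re-derive it from the sup-norm convergence of the distribution functions. One point in your favour: you correctly restrict the uniform convergence of the quantile functions to intervals $[\varepsilon,1]$ to account for a possible discontinuity of $C_{\boldsymbol{T}\mid\boldsymbol{\vartheta}^{*}}^{-1}$ at $0$ when the generator is non-strict, and you note that this is harmless for the calibration since the evaluation point $1-\alpha$ is bounded away from $0$; the paper's own proof is looser here, asserting pointwise consistency ``on the compact set of continuity points $(0,1]$'' (which is not compact) and then uniformity on all of $[0,1]$, a claim that as stated requires either strictness of the generator or exactly the endpoint caveat you supply.
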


\begin{proof}
The argumentation is the same as for the consistency of $\hat{\psi}_{n}$
in \citet{Genest_Neslehova_Ziegel_2011}. Let $C_{n}$, $n\in\mathbb{N}$,
be non-decreasing functions with $C_{n}\stackrel{w}{\rightarrow}C$.
Then $C_{n}^{-1}\stackrel{w}{\rightarrow}C^{-1}$ (see Proposition
0.1 in \citet{Resnick_1987}). Again, $C^{-1}$ denotes the quantile
of $u\mapsto C\left(u,\ldots,u\right)$. Thus, the function $C\mapsto C^{-1}$
is continuous on the space of cdfs with a suitable norm for the weak
convergence. By the CMT, $\hat{C}_{\boldsymbol{T},n}^{-1}$ is pointwise
consistent on the compact set of continuity points $(0,1]$ and therefore,
uniformly on $\left[0,1\right]$.
\end{proof}
\begin{prop}
\label{prop: rFWER is consistent}The realized FWER defined by $\widehat{\text{rFWER}}_{n}=1-C_{\boldsymbol{T}\mid\boldsymbol{\vartheta}^{*}}\left(1-\hat{\alpha}_{\text{loc},n}\right)$
converges in probability to $\alpha$ as $n\rightarrow\infty$.
\end{prop}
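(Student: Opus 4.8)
The plan is to feed the consistent estimator $\hat{\alpha}_{\text{loc},n}$ through the diagonal section of the \emph{true} Archimedean copula and to invoke the continuous mapping theorem one more time. To begin, \prettyref{lem: hat_C_T is uniformly consistent} already gives that $\hat{\alpha}_{\text{loc},n}$ is consistent, i.e.\ $\hat{\alpha}_{\text{loc},n}$ converges in probability to the deterministic limit $\alpha_{\text{loc}} := 1 - C_{\boldsymbol{T}\mid\boldsymbol{\vartheta}^{*}}^{-1}(1-\alpha)$ as $n\to\infty$, where $C_{\boldsymbol{T}\mid\boldsymbol{\vartheta}^{*}}^{-1}$ denotes the quantile function of the diagonal section $u\mapsto C_{\boldsymbol{T}\mid\boldsymbol{\vartheta}^{*}}(u,\ldots,u)$ of the true copula. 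In particular $1-\alpha_{\text{loc}} = C_{\boldsymbol{T}\mid\boldsymbol{\vartheta}^{*}}^{-1}(1-\alpha)$ by construction.

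Next, I would note that the diagonal section $u \mapsto C_{\boldsymbol{T}\mid\boldsymbol{\vartheta}^{*}}(u,\ldots,u) = \psi\left(m\,\phi(u)\right)$ is continuous on $[0,1]$ — because $\phi$ is continuous and $\psi$ is its continuous generalized inverse, or, more simply, because every copula is $1$-Lipschitz in each argument, so its diagonal is Lipschitz. Hence the map $g\colon t \mapsto 1 - C_{\boldsymbol{T}\mid\boldsymbol{\vartheta}^{*}}(1-t,\ldots,1-t)$ is continuous, and since $\widehat{\text{rFWER}}_n = g\left(\hat{\alpha}_{\text{loc},n}\right)$, the continuous mapping theorem (for the composition of a convergent-in-probability sequence with a fixed continuous function) yields that $\widehat{\text{rFWER}}_n$ converges in probability to $g(\alpha_{\text{loc}}) = 1 - C_{\boldsymbol{T}\mid\boldsymbol{\vartheta}^{*}}\bigl(1-\alpha_{\text{loc}},\ldots,1-\alpha_{\text{loc}}\bigr)$. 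Substituting $1-\alpha_{\text{loc}} = C_{\boldsymbol{T}\mid\boldsymbol{\vartheta}^{*}}^{-1}(1-\alpha)$ and using that, for a continuous one-dimensional distribution function $F$, the composition $F\circ F^{-1}$ is the identity on $(0,1)$, the limit equals $1-(1-\alpha) = \alpha$, which is the assertion.

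The step requiring the most care is this last passage, i.e.\ the simultaneous use of the continuous mapping theorem and of the identity $F\circ F^{-1}=\mathrm{id}$: both rely on the continuity of the diagonal section of the true Archimedean copula, and the quantile convergence in \prettyref{lem: hat_C_T is uniformly consistent} is only guaranteed at continuity points of $C_{\boldsymbol{T}\mid\boldsymbol{\vartheta}^{*}}^{-1}$. Since the proposition evaluates everything at the single fixed level $1-\alpha$ and the diagonal section is continuous everywhere, $1-\alpha$ is automatically such a continuity point and no additional regularity on $\alpha$ is needed; what is left is purely bookkeeping about which version of the continuous mapping theorem is invoked at which stage.
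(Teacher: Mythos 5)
Your proof is correct and takes essentially the same route as the paper's, whose own argument is a one-line appeal to \prettyref{lem: hat_C_T is uniformly consistent} together with the continuous mapping theorem applied to the (continuous) diagonal section $u\mapsto C_{\boldsymbol{T}\mid\boldsymbol{\vartheta}^{*}}\left(u,\ldots,u\right)$. You merely spell out the final identification $C_{\boldsymbol{T}\mid\boldsymbol{\vartheta}^{*}}\bigl(C_{\boldsymbol{T}\mid\boldsymbol{\vartheta}^{*}}^{-1}(1-\alpha)\bigr)=1-\alpha$, which the paper leaves implicit.
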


\begin{proof}
This result follows directly from \prettyref{lem: hat_C_T is uniformly consistent}
and the CMT applied to $u\mapsto C_{\boldsymbol{T}\mid\boldsymbol{\vartheta}^{*}}\left(u,\ldots,u\right)$.
\end{proof}

\section{Simulations\label{sec: Simulations}}

\subsubsection*{Overview}

In this section we consider three simulations. The first simulation
compares pointwise the estimated copulas on the diagonal line. The
second simulation is more of theoretic interest. There, we sample
the test statistics under the LFC directly and continue as in the
first simulation. We compare the estimated generators in terms of
their calibrated local significance level and empirical power. In
the last simulation we consider a misspecified setting by sampling
normally distributed data. We use the sample means as test statistics
and bootstrap to derive a pseudo sample for the test statistics under
the LFC.

\subsubsection*{Simulation 1 - comparison of the generators}

In this simulation a sample of the test statistics is directly taken
from a Gumbel copula under various parameters derived from Kendall's
$\tau$. For consistency with Simulation 3, we denote the sample size
of the test statistics by $B$ instead of $n$. We consider the estimators
$\hat{\phi}_{B,m}^{\text{GNZ}}$ from \citet{Genest_Neslehova_Ziegel_2011}
and $\hat{\phi}_{B,m}^{\text{MC}}$, which is defined by 
\[
\hat{\phi}_{B,m}^{\text{MC}}=M^{-1}\sum_{\ell\in M}\hat{\phi}_{B,2}^{\text{GNZ}}\left(X_{i,j_{1}\left(\ell\right)},X_{i,j_{2}\left(\ell\right)}:1\leq i\leq B\right),
\]
where $1\leq j_{1}\neq j_{2}\leq m$, $m$ is the dimension and $M=100$
is the number of Monte Carlo repetitions. In low dimensions it is
possible to average over all two-dimensional subsamples as in \prettyref{sec: Section 2}.
We compare the estimated copulas regarding the pointwise absolute
distance to the true copula $C_{\phi}$, i.e., $\left|C_{\hat{\phi}_{B}}\left(u,\ldots,u\right)-C_{\phi}\left(u,\ldots,u\right)\right|$,
$u\in\left\{ 0,1/\left(B+1\right),\ldots,1\right\} $.

The default setting is $\tau=0.5$, $B=100$ and $m=6$. In \prettyref{fig: Simulation 1-1}
and \prettyref{fig: Simulation 1-2}, we plot the resulting Archimedean
copulas $C\left(u,\ldots,u\right)$, where $u$ lies on the grid $\left\{ 0,1/\left(B+1\right),\ldots,1\right\} $.
At each point the copula is averaged over $L$ repetitions. It can
be generally observed that the copula of both estimators are comparable
with some exceptions. Under independence, GNZ performs considerably
worse. In dimensions $m\in\left\{ 30,60\right\} $, the distance between
MC and the true copula is slightly smaller and more consistent compared
to GNZ. Further, it can be observed that the sample size should not
be too small ($B\geq100$). There is a slight but visible improvement
from $B=50$ to $B=100$. We will consider additional settings in
the next two simulations. 
\begin{figure}[H]
\begin{centering}
\includegraphics[width=0.4\columnwidth]{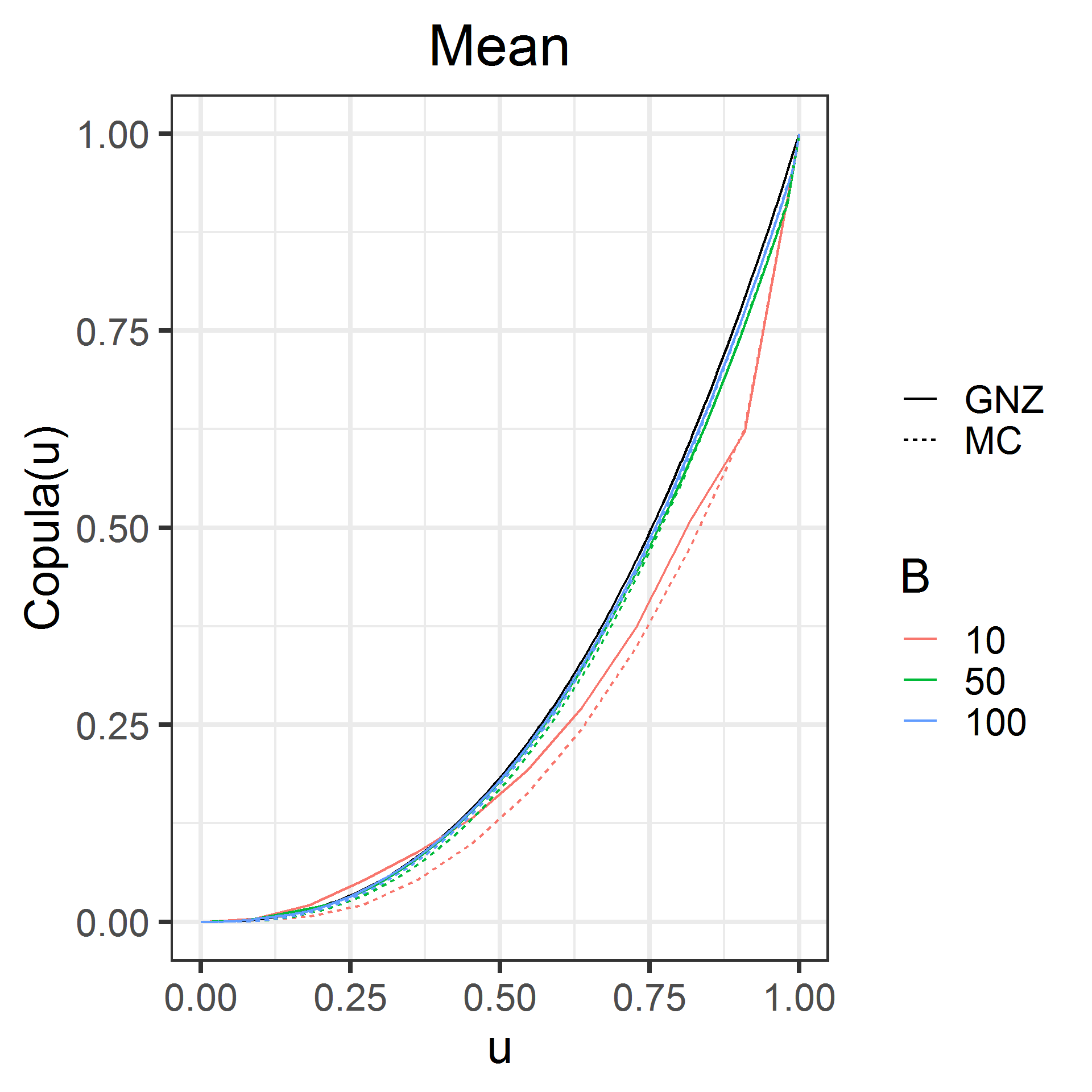}$\qquad$\includegraphics[width=0.4\columnwidth]{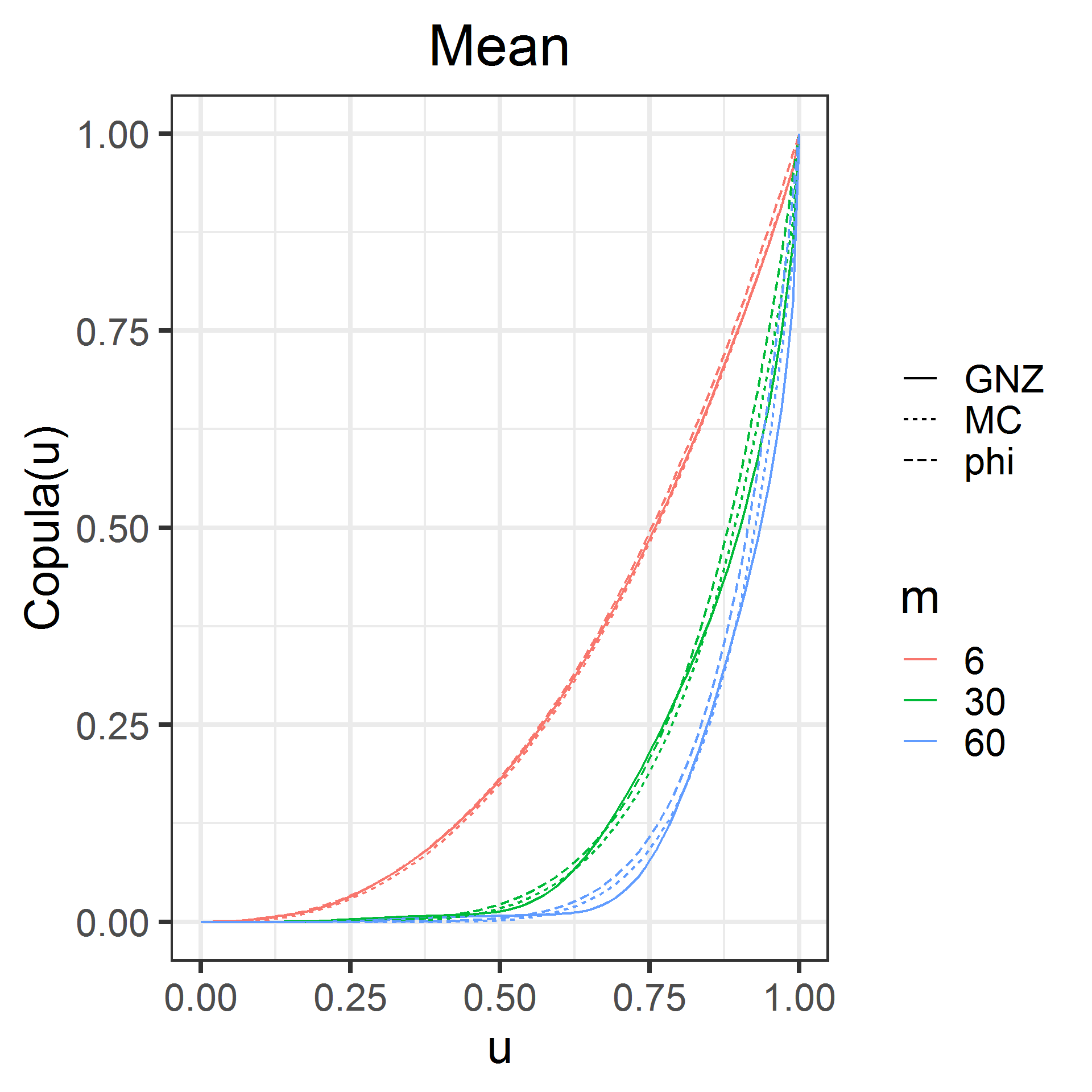}
\par\end{centering}
\caption{The pointwise mean of the copulas over $L=1{,}000$ repetitions with
sample size $B\in\left\{ 10,50,100\right\} $ (left) and dimension
$m\in\left\{ 6,30,60\right\} $ (right). The black line in the left
plot corresponds to the true copula $C_{\phi}$. All other parameters
are set to default.\label{fig: Simulation 1-1}}
\end{figure}
\begin{figure}[H]
\begin{centering}
\includegraphics[width=0.4\columnwidth]{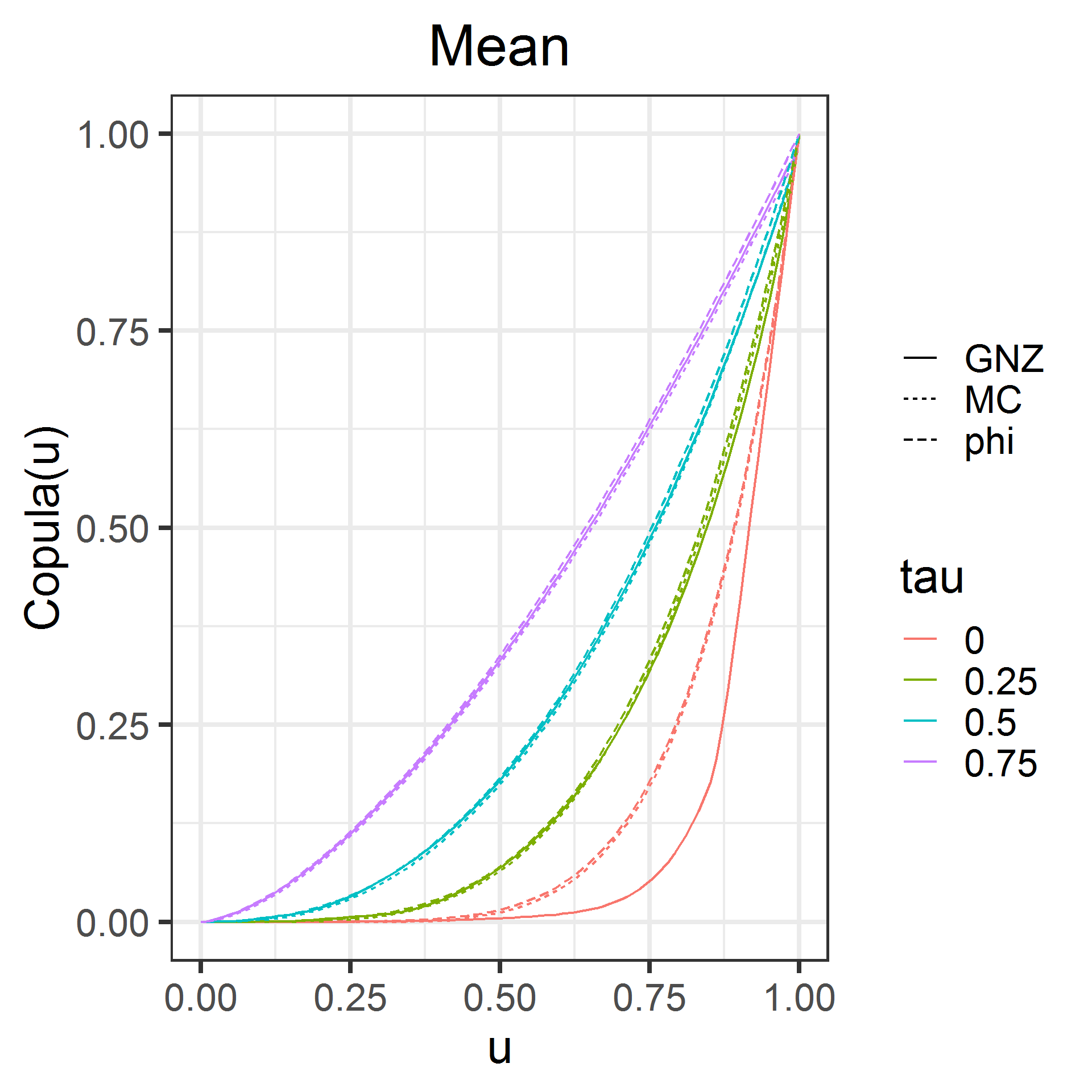}
\par\end{centering}
\caption{The pointwise mean of the copulas over $L=1{,}000$ repetitions with
Kendall's $\tau\in\left\{ 0,0.25,0.5,0.75\right\} $. All other parameters
are set to default.\label{fig: Simulation 1-2}}
\end{figure}

\subsubsection*{Simulation 2 - calibration in an artificial setting}

The setting in this simulation is very similar to Simulation 1. Again,
no raw data is simulated. Instead, a sample of the test statistics
$\boldsymbol{T}_{1}^{*},\ldots,\boldsymbol{T}_{B}^{*}$ from a Gumbel
copula $C_{\theta^{*}}$, $\theta^{*}\geq1$, with standard normal
marginals is given, where $\theta^{*}$ corresponds to the copula
parameter under the LFC. We use this sample to calibrate the multiple
test. This means that the estimation procedure from the first simulation
is applied to $\boldsymbol{T}_{1}^{*},\ldots,\boldsymbol{T}_{B}^{*}$.
Then, we derive the estimated local significance level $\hat{\alpha}_{\text{loc},B}$
as in \prettyref{sec: Section 3}.

We test the mean vector $\boldsymbol{\vartheta}=\boldsymbol{\mu}$
by the hypotheses $H_{j}:\left\{ \mu_{j}=0\right\} $ versus $K_{j}:\left\{ \mu_{j}\neq0\right\} $.
The observed test statistics $\boldsymbol{T}=\left(T_{1},\ldots,T_{m}\right)^{\top}$
are taken from a Gumbel copula $C_{\theta}$, $\theta\geq1$, with
normal marginals. More specifically, $T_{j}\mid H_{j}\sim\mathcal{N}\left(0,n^{-1}\right)$
and $T_{j}\mid K_{j}\sim\mathcal{N}\left(\mu,n^{-1}\right)$ , where
$\mu>0$ denotes the unknown global effect size (for simplicity) and
$n\in\mathbb{N}$ emulates a effect of the data sample size. The copula
parameter $\theta^{*}=\theta^{*}\left(\tau\right)$ is directly derived
from Kendall's $\tau$. We set $\theta=\theta^{*}+\left(1-\pi_{0}\right)\mu$,
where $\pi_{0}$ is the proportion of true null hypotheses. The two-sided
$p$-value $p_{j}$ is given by $2\left(1-F_{T_{j}\mid H_{j}}\left(\left|T_{j}\right|\right)\right)$.

In this setup, the copula of the test statistics depends indirectly
on $\boldsymbol{\mu}$. In particular, we have $\theta=\theta^{*}$
under $H_{0}$. Note that $0\equiv\boldsymbol{\mu}^{*}\in H_{0}$
is an LFC since the Gumbel copula on the diagonal line $u\mapsto C_{\theta}\left(u,\ldots,u\right)=u^{m^{1/\theta}}$
is non-increasing for $\theta\downarrow\theta^{*}$, where $m>1$
and $0\leq u\leq1$. Therefore, it is straightforward to check that
$\text{FWER}_{\boldsymbol{\mu}}\left(\boldsymbol{\varphi}\right)\leq1-C_{\theta}\left(1-\alpha_{\text{loc}}\right)\leq1-C_{\theta^{*}}\left(1-\alpha_{\text{loc}}\right)=\text{FWER}_{\boldsymbol{\mu}^{*}}\left(\boldsymbol{\varphi}\right)$. 

The default setting is $\tau=0.5$, $B=100,$ $n=100$, $m=6$ and
$\mu=0.2$. In \prettyref{fig: Simulation 2-1} to \prettyref{fig: Simulation 2-3},
the estimated local significance level and empirical power are compared,
respectively. The empirical FWER is omitted but always below $5\%$.
 With increasing sample size $B$, both $\hat{\alpha}_{\text{loc},B}^{\text{GNZ}}$
and $\hat{\alpha}_{\text{loc},B}^{\text{MC}}$ converge to a local
significance level close to the true significance level $\alpha_{\text{loc}}^{\phi_{\theta}}$.
With larger dimension $m$, $\hat{\alpha}_{\text{loc},B}^{\text{GNZ}}$
and $\hat{\alpha}_{\text{loc},B}^{\text{MC}}$ fall drastically and
are close to the Bonferroni correction.  Under stronger dependence
$\tau$, $\hat{\alpha}_{\text{loc},B}^{\text{GNZ}}$ and $\hat{\alpha}_{\text{loc},B}^{\text{MC}}$
deviate again from the Bonferroni correction and get closer to $\alpha_{\text{loc}}^{\phi_{\theta}}$.
In most settings, the GNZ estimator is slightly better in terms of
the mean local significance level but slightly worse in terms of standard
deviation (see the supplement for the standard deviation plots and
numerical results). Overall , this simulation suggests that both estimators
are comparable for calibrating multiple tests.
\begin{figure}[H]
\begin{centering}
\includegraphics[width=0.4\columnwidth]{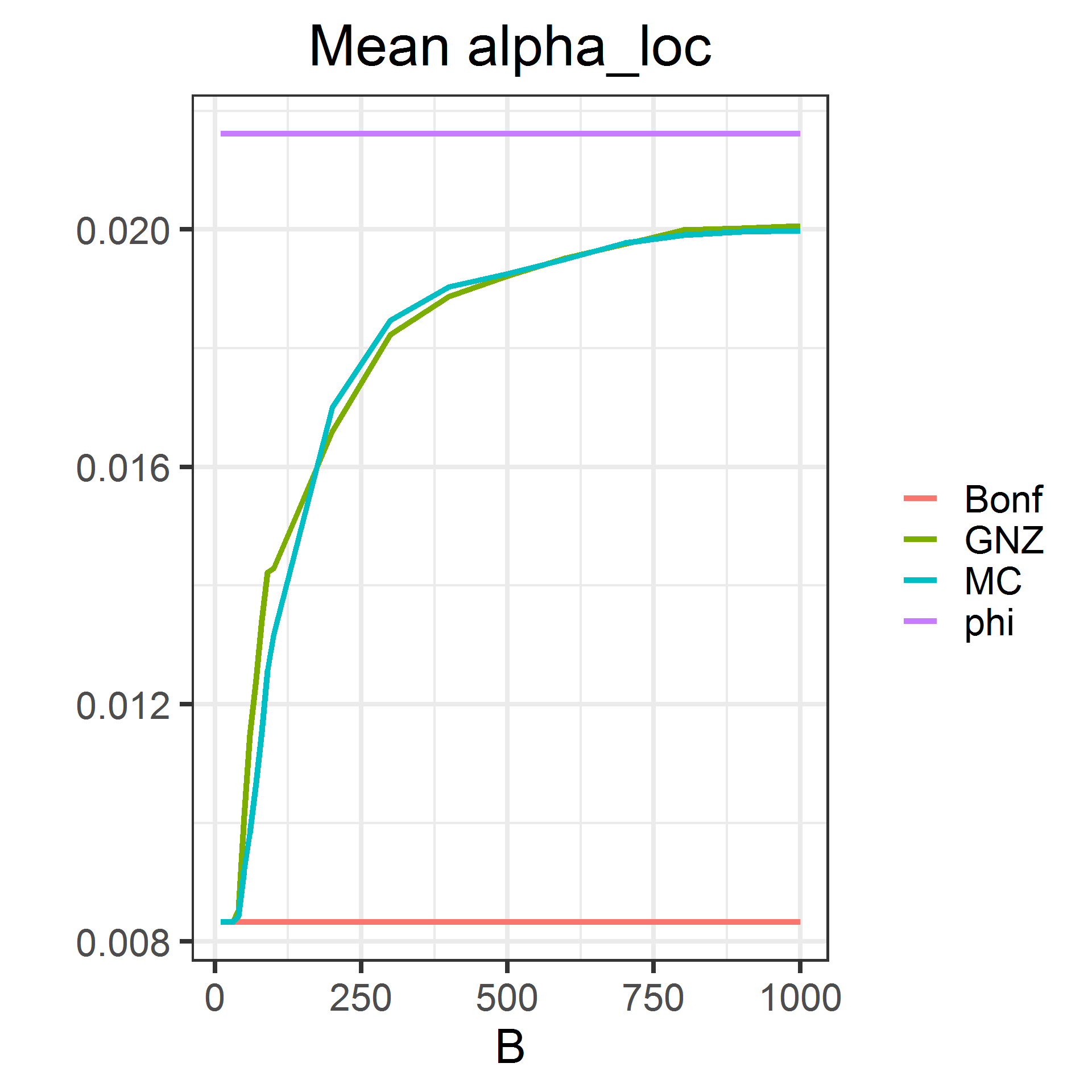}$\qquad$\includegraphics[width=0.4\columnwidth]{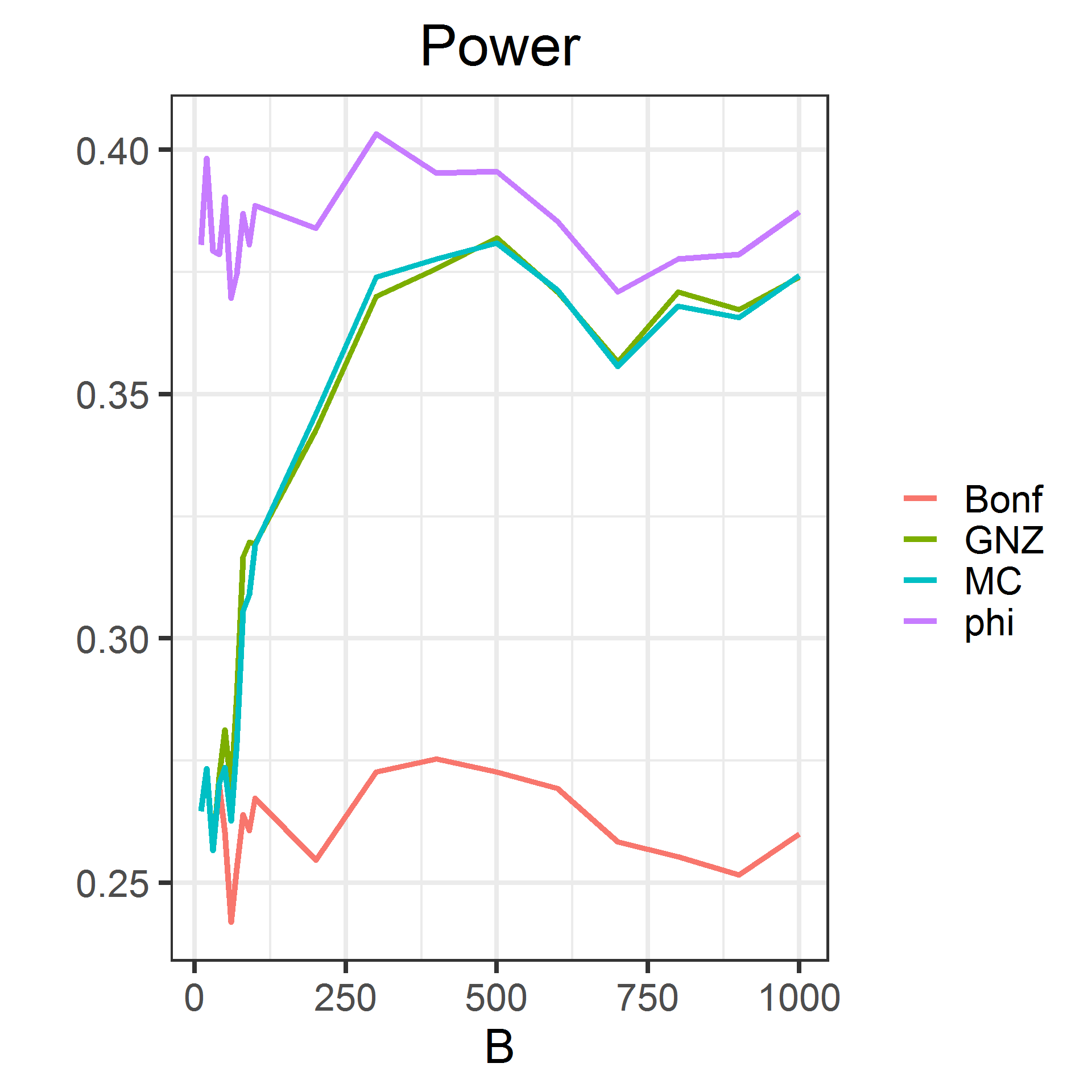}
\par\end{centering}
\caption{The mean local significance level and mean empirical power over $L=1{,}000$
tests with sample size $B\in\left\{ 10,20,\ldots,100,200,\ldots,1000\right\} $.
All other parameters are set to default.\label{fig: Simulation 2-1}}
\end{figure}
\begin{figure}[H]
\begin{centering}
\includegraphics[width=0.4\columnwidth]{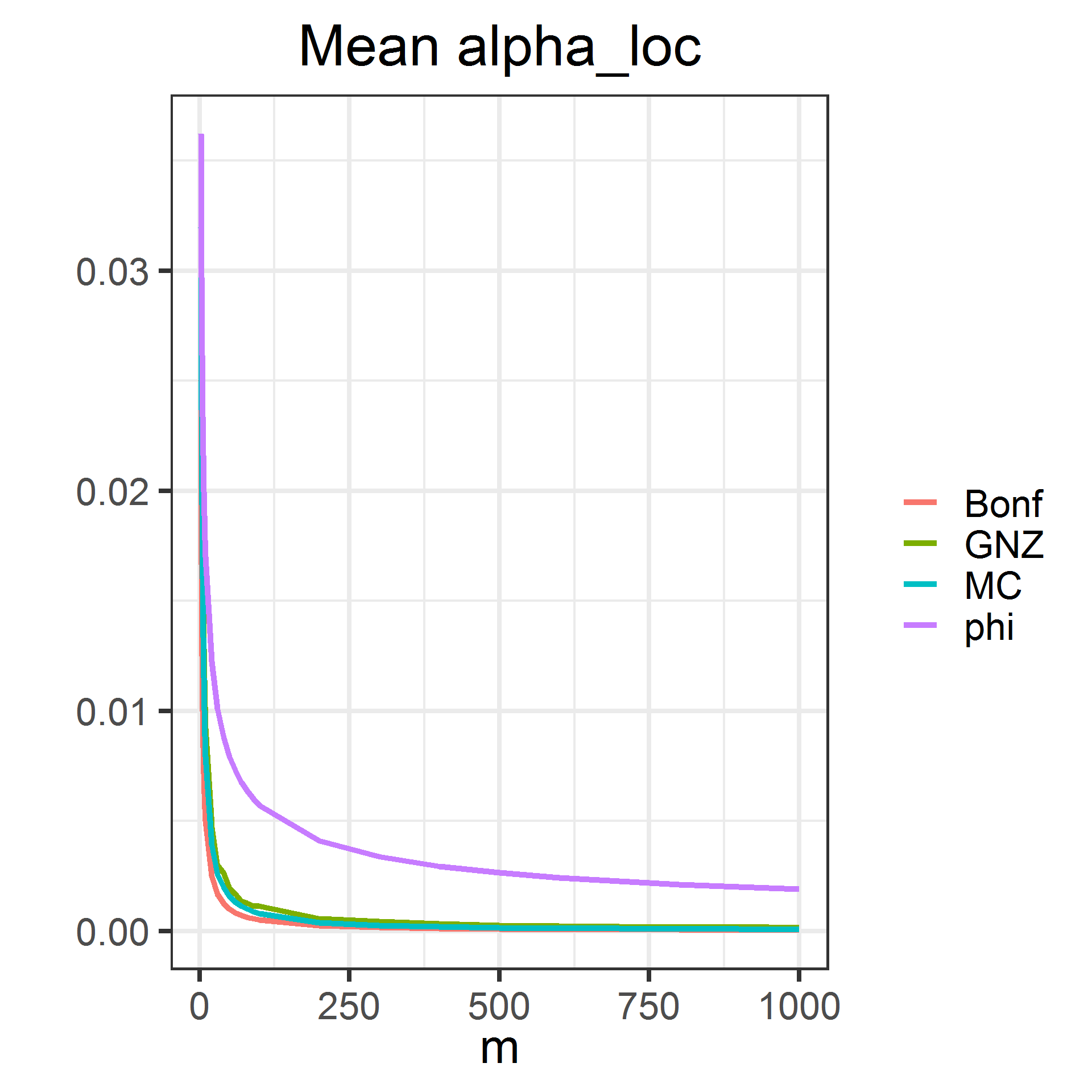}$\qquad$\includegraphics[width=0.4\columnwidth]{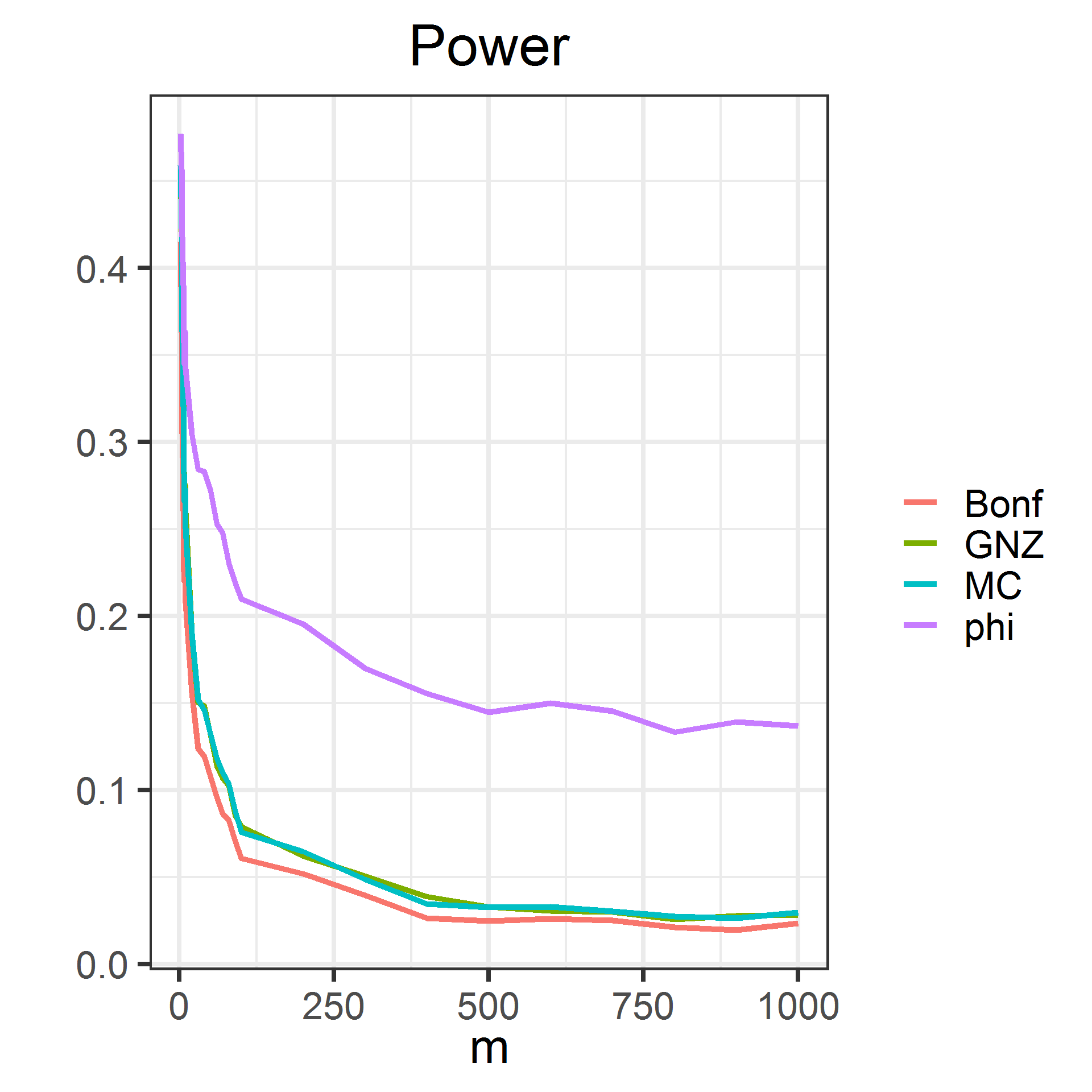}
\par\end{centering}
\caption{The mean local significance level and mean empirical power over $L=1{,}000$
tests with dimension $m\in\left\{ 2,3,\ldots,10,20,\ldots,100,200,\dots,1000\right\} $.
All other parameters are set to default.\label{fig: Simulation 2-2}}
\end{figure}
\begin{figure}[H]
\begin{centering}
\includegraphics[width=0.4\columnwidth]{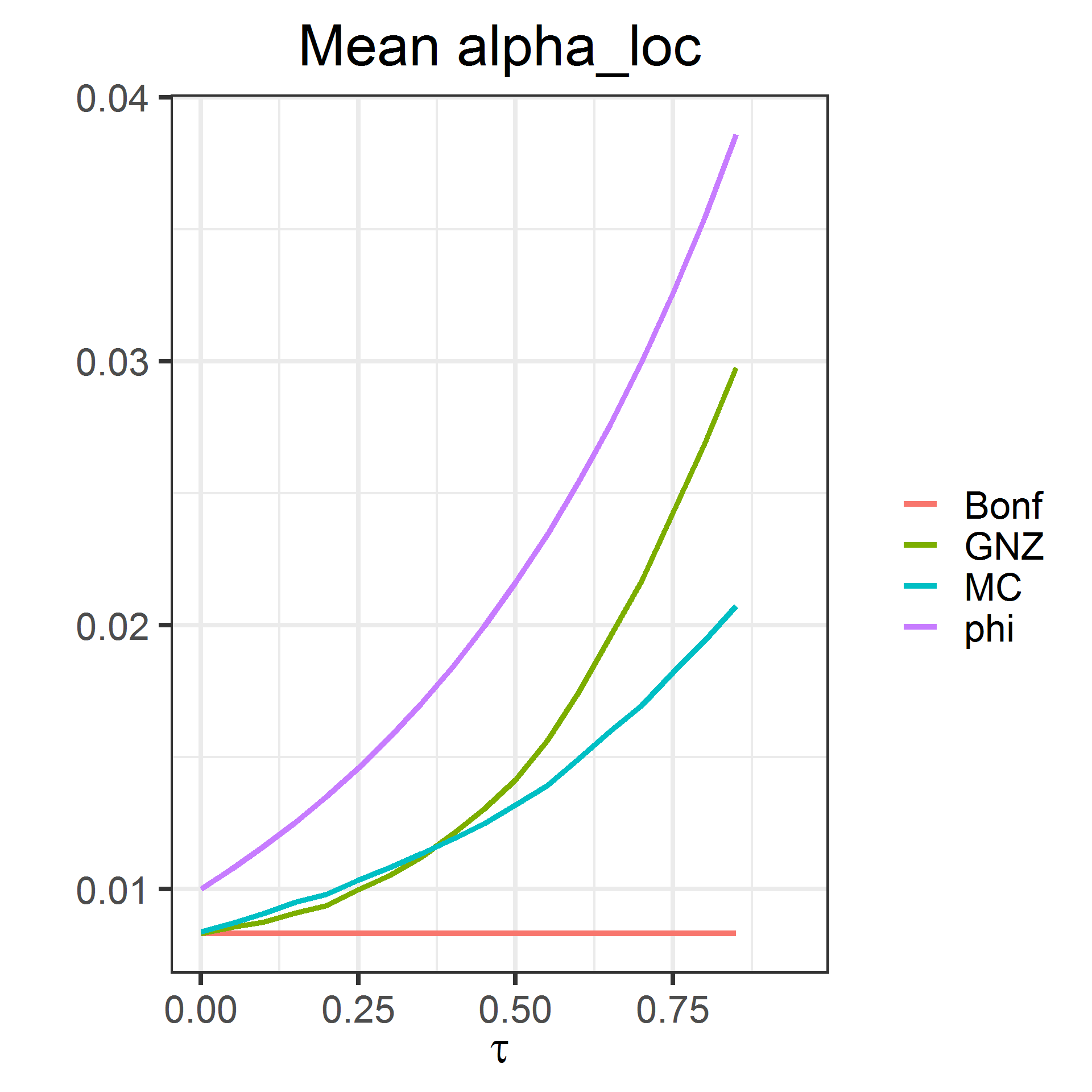}$\qquad$\includegraphics[width=0.4\columnwidth]{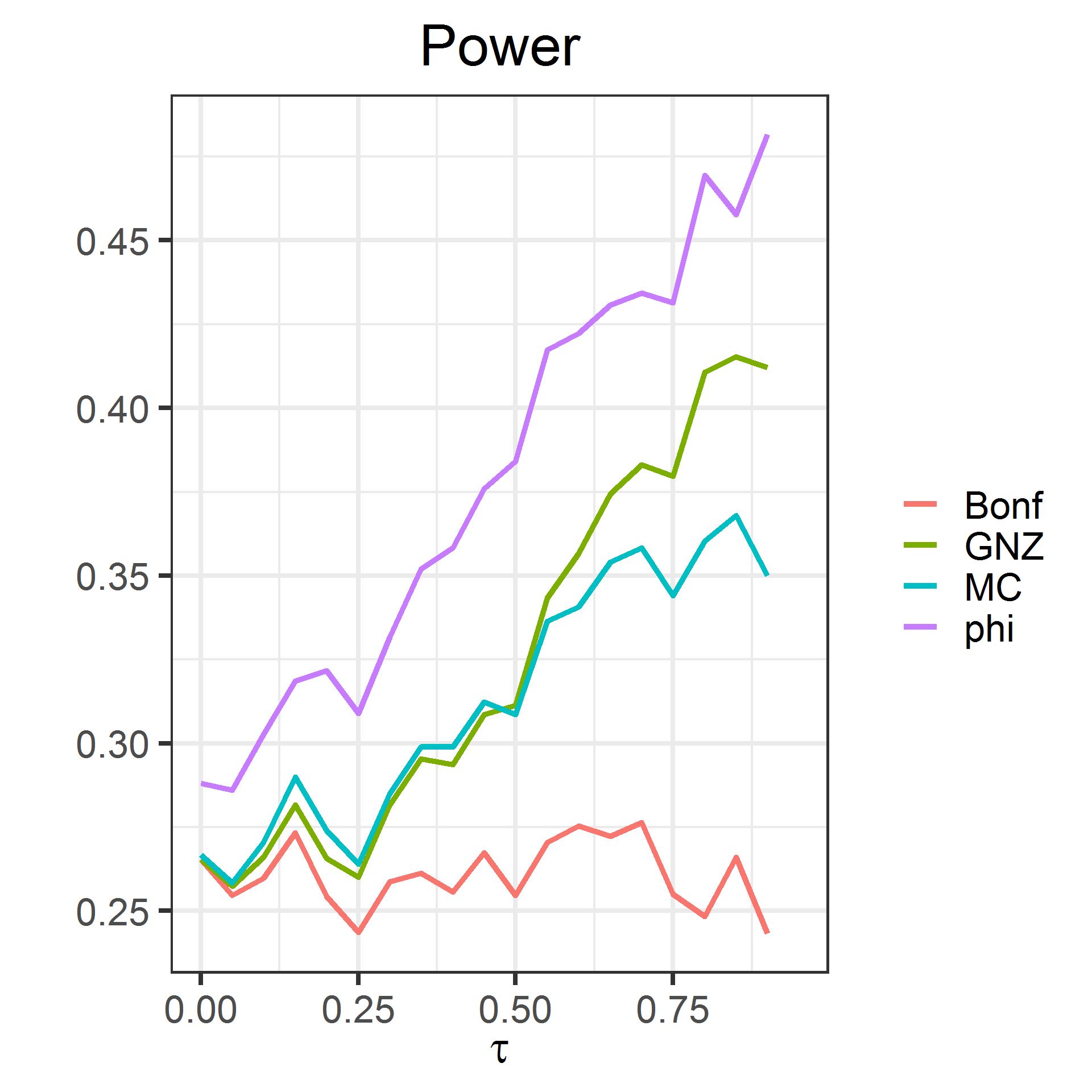}
\par\end{centering}
\caption{The mean local significance level and mean empirical power over $L=1{,}000$
tests with Kendall's $\tau\in\left\{ 0,0.05,\ldots,0.90\right\} $.
All other parameters are set to default.\label{fig: Simulation 2-3}}
\end{figure}

\subsubsection*{Simulation 3 - calibration in a misspecified setting}

In this simulation the data $\boldsymbol{X}_{1},\ldots,\boldsymbol{X}_{n}$
is sampled from a multivariate normal distribution with equi-correlation
coefficient $\rho$ and known variances equal to one (i.e., the covariance
matrix $\Sigma=\rho\boldsymbol{1}_{m\times m}+\left(1-\rho\right)\boldsymbol{I}_{m\times m}$,
where $\boldsymbol{I}_{m\times m}$ is the identity matrix). We test
the mean vector $\boldsymbol{\vartheta}=\boldsymbol{\mu}$ by the
hypotheses $H_{j}:\left\{ \mu_{j}=0\right\} $ versus $K_{j}:\left\{ \mu_{j}\neq0\right\} $.
The test statistic $T_{j}$ is given by $\left|\overline{\boldsymbol{X}_{j}}\right|=\left|n^{-1}\sum_{i=1}^{n}X_{i,j}\right|$
and the two-sided $p$-value $p_{j}$ is equal to $1-F_{T_{j}\mid H_{j}}\left(T_{j}\right)=2\left(1-F_{\overline{\boldsymbol{X}_{j}}\mid H_{j}}\left(T_{j}\right)\right)$.
We have $\overline{\boldsymbol{X}_{j}}\mid H_{j}\sim\mathcal{N}\left(0,n^{-1}\right)$
and $\overline{\boldsymbol{X}_{j}}\mid K_{j}\sim\mathcal{N}\left(\mu,n^{-1}\right)$
, where $\mu$ denotes the global effect size. Then $\boldsymbol{\mu}^{*}\equiv0$
is the LFC since the Gaussian copula does not depend on $\boldsymbol{\mu}$
and the LFC is marginally attained at zero. A sample of the test
statistics $\boldsymbol{T}_{1}^{*},\ldots,\boldsymbol{T}_{B}^{*}$
is calculated by bootstrapping the data. Under the LFC, $T_{b,j}^{*}$
is equal to $\left|\overline{\boldsymbol{X}_{j}^{*}}-\hat{\mu}_{j}\right|=\left|\overline{\boldsymbol{X}_{j}^{*}}-\overline{\boldsymbol{X}_{j}}\right|$.
We use the bootstrap sample of the test statistics to estimate the
generators. The estimation procedure is then the same as in the first
two simulations.

The default setting is $\tau=0.5$, $B=100$, $n=100,$ $m=6$ and
$\mu=0.2$. In \prettyref{fig: Simulation 3-1} and \prettyref{fig: Simulation 3-2},
the mean empirical power over $L$ tests resulting from the estimated
local significance level are compared. The empirical FWER is omitted
and (almost) always below $5\%$. The only exceptions are due to numerical
issues. With increasing $m$, the \verb-R- function \verb-uniroot-
tends to fail in this setup. This function is used for the estimator
$\hat{\alpha}_{\text{loc},B}^{\text{GNZ}}$. We skipped these cases
such that the results regarding GNZ in the right plot of \prettyref{fig: Simulation 3-1}
are potentially averaged over less than $L$ tests and can contain
numerical errors. In most settings the empirical power is close to
the Bonferroni method. A notable improvement is achieved under large
equi-correlation $\rho$. Overall, this simulation unfortunately suggests
that the Bonferroni correction is preferable when the assumption that
the test statistics copula is Archimedean is violated.
\begin{figure}[H]
\begin{centering}
\includegraphics[width=0.4\columnwidth]{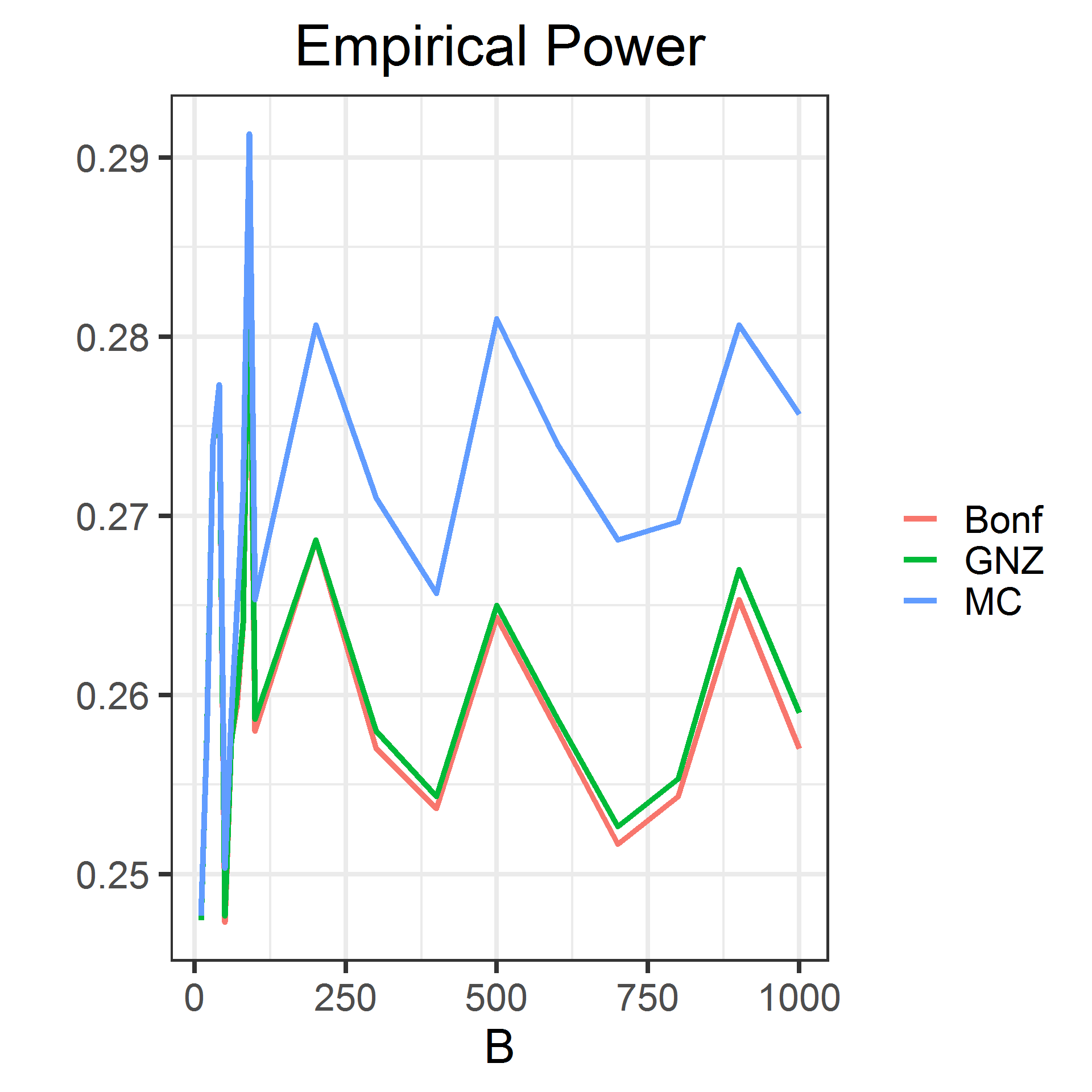}$\qquad$\includegraphics[width=0.4\columnwidth]{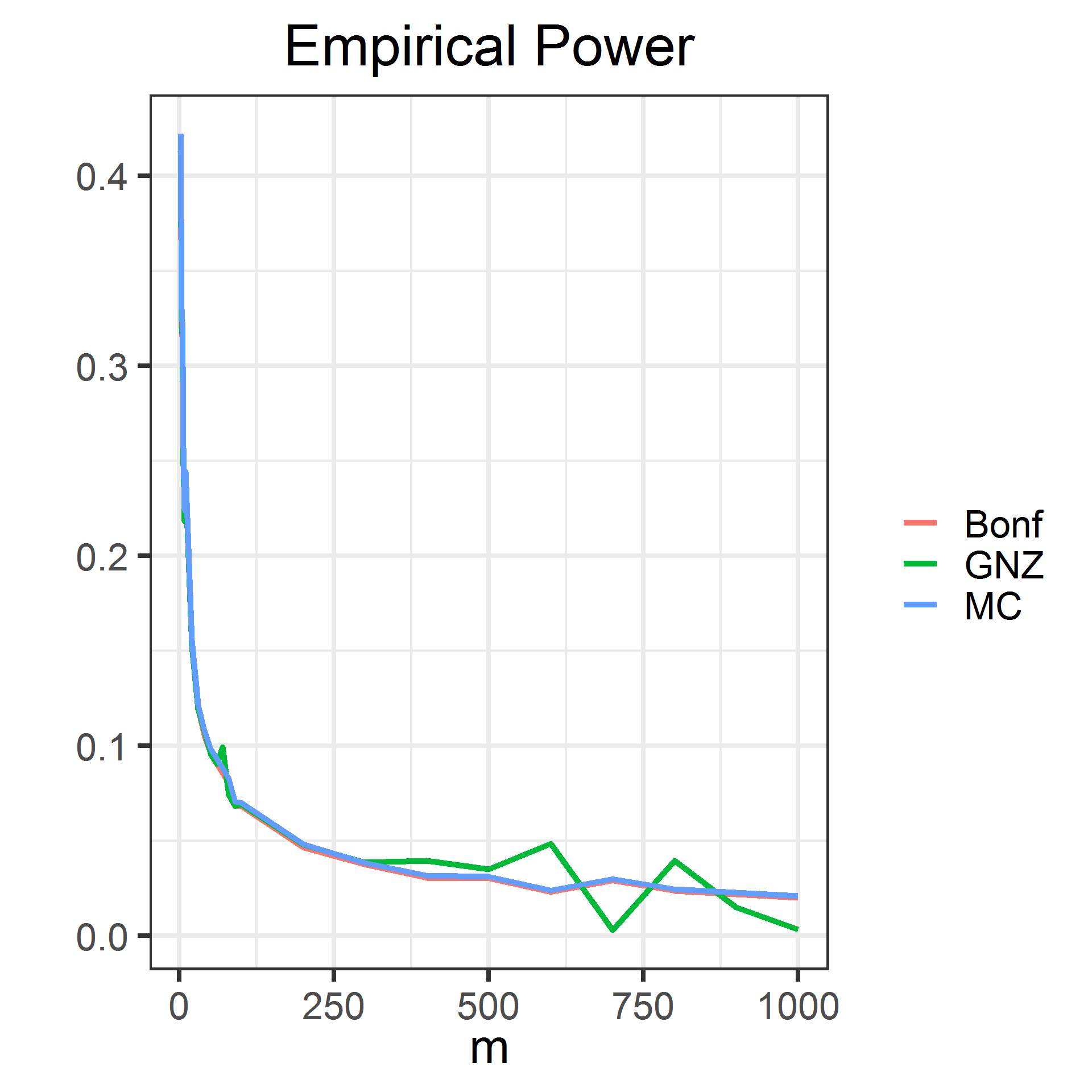}
\par\end{centering}
\caption{The mean empirical power over $L=1{,}000$ tests with sample size
$n\in\left\{ 10,20,\ldots,100,200,\ldots,1000\right\} $ (left) and
dimension $m\in\left\{ 2,3,\ldots,10,20,\ldots,100,200,\dots,1000\right\} $
(right). All other parameters are set to default.\label{fig: Simulation 3-1}}
\end{figure}
\begin{figure}[H]
\begin{centering}
\includegraphics[width=0.4\columnwidth]{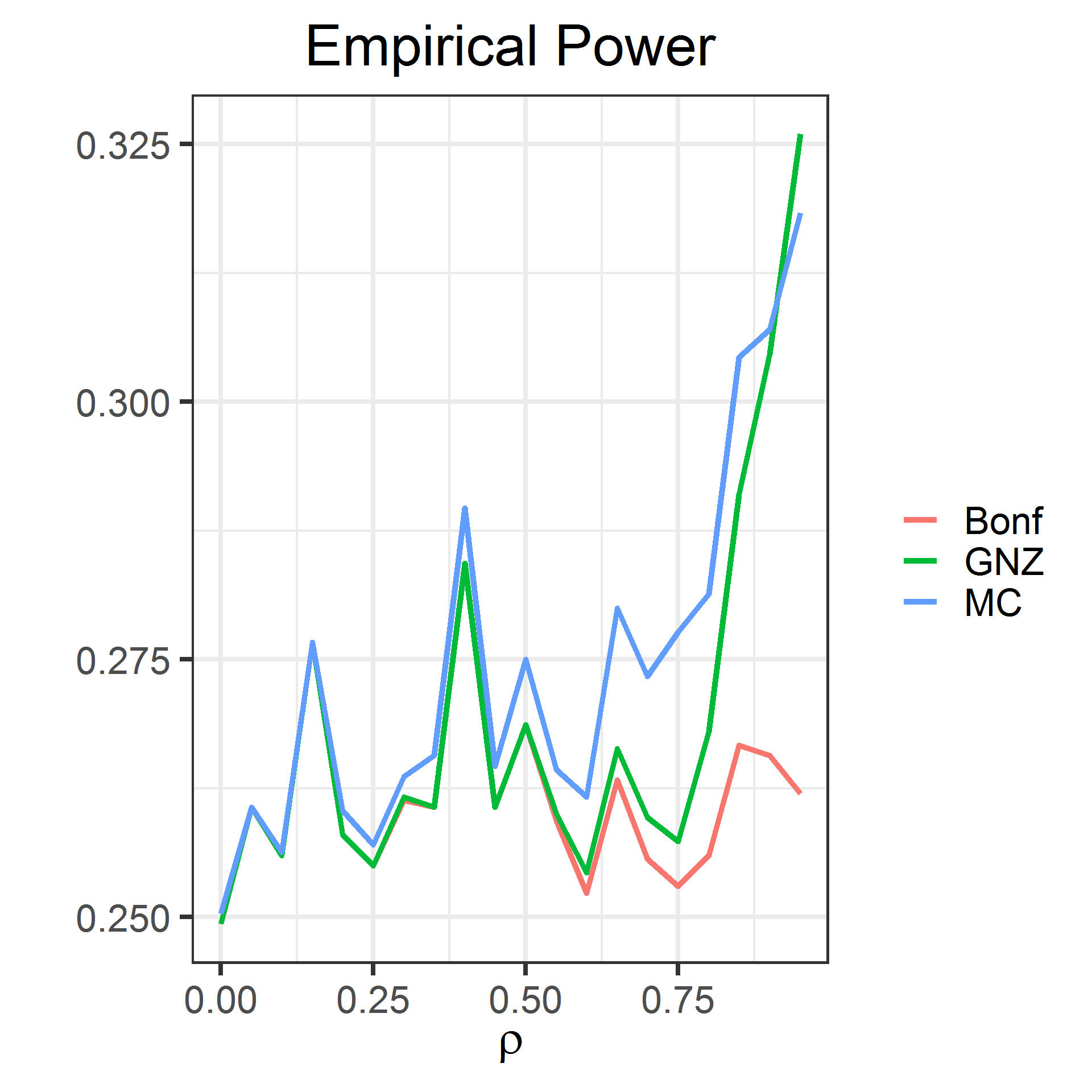}$\qquad$\includegraphics[width=0.4\columnwidth]{figures/2019-02-26-simulation-B-power}
\par\end{centering}
\caption{The mean empirical power with equi-correlation coefficient $\rho\in\left\{ 0,0.05,\ldots,0.95\right\} $
(left) and bootstrap sample size $B\in\left\{ 10,20,\ldots,100,200,\ldots,1000\right\} $
(right). All other parameters are set to default.\label{fig: Simulation 3-2}}
\end{figure}

\section{Conclusions\label{sec: Conclusions}}

We considered two non-parametric Archimedean generator estimator in
the context of multiple testing. The consistency of the modified estimator
is established and can be observed in \prettyref{fig: Simulation 1-1}
and \prettyref{fig: Simulation 2-1} as well. 

The simulations suggest that both estimators perform comparable but
the modified estimator is easier to implement and numerically more
stable. Some improvements of this estimator which are visible when
comparing both estimators directly do not carry over to a better performance
in multiple testing. For example, the modified estimator performs
considerably better under independence but the performance of both
resulting multiple tests is almost identical to the Bonferroni test
in this situation.

The modified estimator is still affected by the curse of dimensionality.
However, we did not perform separate sample size calculations given
the dimension and power. By Simulation 2, it is likely that the sample
size needs to be much larger than the dimension for a performance
comparable with the true generator. For example, in \prettyref{fig: Simulation 2-1}
the dimension is $6$ and the sample size needs to be larger than
$250$.

We observed that the Archimedean assumption is essential unless strong
dependencies are present. However, this might be only because the
dependency is not utilized by the Bonferroni test. We did not perform
comparisons with other copula estimation methods. Hence, we can not
generally conclude that multiple tests should be performed using non-parametric
Archimedean generator estimators under strong dependencies.

\subsubsection*{Supplementary Materials}

The source code, plots and numerical results of the simulations are
available online.


\end{document}